\newcommand{\seq} {dominance sequence}
\newcommand{\aseq} {approximate dominance sequence}
\newcommand{\Seq} {Dominance Sequence}
\newcommand{\pars} {partition sequence}
\newcommand{\cps} {compressed partition sequence}
\newcommand{\struct} {leveled bucket structure}
\newcommand{\bucket} {leveled buckets}
\newcommand{\stt} {\mb{start}}
\newcommand{\en} {\mb{end}}
\newcommand{\pw} {\mb{lg}}
\newcommand{\disPreserve} {distance preserving}
\newcommand{\myparagraph}[1]{\medskip\noindent {\bf #1.}}
\title{Efficient Construction of Probabilistic Tree Embeddings}
\author{Guy Blelloch\\Carnegie Mellon University\\guyb@cs.cmu.edu \and Yan Gu\\Carnegie Mellon University\\yan.gu@cs.cmu.edu \and Yihan Sun\\Carnegie Mellon University\\yihans@cs.cmu.edu}
\date{}
\begin{document}
    \maketitle
    \thispagestyle{empty}
    \par
\setcounter{secnumdepth}{2}
\setcounter{tocdepth}{2}

\begin{abstract}
In this paper we describe an algorithm that embeds a graph metric $(V,d_G)$ on an undirected weighted graph $G=(V,E)$
into a distribution of tree metrics $(T,D_T)$ such that for every pair $u,v\in V$,
$d_G(u,v)\leq d_T(u,v)$ and $\E_{T}[d_T(u,v)]\leq O(\log n)\cdot d_G(u,v)$.
Such embeddings have proved highly
useful in designing fast approximation algorithms, as many hard problems on graphs are easy to solve on
tree instances.
For a graph with $n$ vertices and $m$ edges, our algorithm runs in $O(m\log n)$ time with high probability, which improves the previous upper bound of $O(m\log^3 n)$ shown by Mendel et al.\,in 2009.

The key component of our algorithm is a new approximate single-source shortest-path algorithm, which implements the priority queue with a new data structure, the \emph{bucket-tree structure}.
The algorithm has three properties: it only requires linear time in the number of edges in the input graph;
the computed distances have a distance preserving property;
and when computing the shortest-paths to the $k$-nearest vertices from the source, it only requires to visit these vertices and their edge lists.
These properties are essential to guarantee the correctness and the stated time bound.

Using this shortest-path algorithm, we show how to generate an intermediate structure, the approximate dominance sequences of the input graph, in $O(m \log n)$ time, and further propose a simple yet efficient algorithm to converted this sequence to a tree embedding in $O(n\log n)$ time, both with high probability.
Combining the three subroutines gives the stated time bound of the algorithm.

Then we show that this efficient construction can facilitate some applications.  We proved that FRT trees (the generated tree embedding) are Ramsey partitions with asymptotically tight bound, so the construction of a series of distance oracles can be accelerated.
\end{abstract}

\newpage
\setcounter{page}{1}
\section{Introduction}

The idea of probabilistic tree embeddings~\cite{Bartal98} is to embed a
finite metric into a distribution of tree metrics with a minimum
expected distance distortion. A distribution $\mathcal{D}$ of trees of
a metric space $(X,d_X)$ should minimize the expected stretch $\psi$
so that:
\begin{enumerate}
  \item dominating property:  for each tree $T\in \mathcal{D}$, $d_X(x,y)\leq d_T(x,y)$ for every $x,y\in X$, and
  \item expected stretch bound: $\E_{T\sim\mathcal{D}}[d_T(x,y)]\leq \psi\cdot d_X(x,y)$ for every $x,y\in X$,
\end{enumerate}
where $d_T(\cdot,\cdot)$ is the tree metric, and $\E_{T\sim\mathcal{D}}$ draws
a tree $T$ from the distribution $\mathcal{D}$.
After a sequence of results~\cite{alon1995graph,Bartal96,Bartal98},
Fakcharoenphol, Rao and Talwar~\cite{FRT} eventually proposed an elegant and asymptotically optimal
algorithm (FRT-embedding) with $\psi=O(\log n)$.

Probabilistic tree embeddings facilitate many applications.
They lead to practical algorithms to solve a number of
problems with good approximation bounds, for example, the $k$-median
problem, buy-at-bulk network design~\cite{Kanat}, and network
congestion minimization~\cite{Racke08}.
A number of network algorithms use tree embeddings as key components,
and such applications include generalized Steiner forest problem, the minimum routing cost spanning tree problem, and the $k$-source shortest paths problem~\cite{Khan2008}.
Also, tree embeddings are
used in solving symmetric diagonally dominant (SDD) linear
systems.  Classic solutions use spanning trees as the preconditioner, but recent work by Cohen et al.~\cite{Cohen2014} describes a new approach to use trees with Steiner nodes (e.g.\ FRT trees).

In this paper we discuss yet another remarkable application of probabilistic tree embeddings:
constructing of approximate distance oracles (ADOs)---a data structure with
compact storage ($o(n^2)$) which can approximately and efficiently
answer pairwise distance queries on a metric space.
We show that FRT trees can be used to accelerate the construction
of some ADOs~\cite{Ramsey,WN,SC}.

Motivated by these applications, efficient algorithms to construct
tree embeddings are essential, and there are several results on the topic
in recent years~\cite{cohen1997,Khan2008,Kanat,FastCKR,ghaffari2014near,Friedrichs2016,blelloch2016parallelism}.
Some of these algorithms are based on different parallel settings, e.g.\ share-memory setting~\cite{Kanat,Friedrichs2016,blelloch2016parallelism} or distributed setting~\cite{Khan2008,ghaffari2014near}.
As with this paper, most of these algorithms~\cite{cohen1997,Khan2008,FastCKR,ghaffari2014near,blelloch2016parallelism} focus on graph
metrics, which most of the applications discussed above are based on.
In the sequential setting, i.e.\ on a RAM model, to the best of our knowledge, the most efficient algorithm to construct optimal
FRT-embeddings was proposed by Mendel and
Schwob~\cite{FastCKR}.  It constructs FRT-embeddings in
$O(m\log^3 n)$ expected time given an undirected positively weighted
graph with $n$ vertices and $m$ edges.
This algorithm, as well as the original construction in the FRT paper~\cite{FRT},
works hierarchically by generating each level of a tree top-down.  However,
such a method can be expensive in time and/or coding
complexity.  The reason is that the diameter of the graph can be
arbitrarily large and the FRT trees may contain many levels, which
requires complicated techniques, such as building
sub-trees based on quotient graphs.

\textbf{Our results.}
The main contribution of this paper is an efficient construction of the
FRT-embeddings.  Given an undirected positively weighted graph
$G=(V,E)$ with $n$ vertices and $m$ edges, our algorithm builds an optimal tree embedding in $O(m\log n)$ time.
In our algorithm, instead of generating partitions by level, we adopt
an alternative view of the FRT algorithm in~\cite{Khan2008,Kanat}, which computes the potential ancestors for each vertex using \textbf{\emph{\seq{}s}} of a graph (first proposed in~\cite{cohen1997}, and named as least-element lists in~\cite{cohen1997,Khan2008}).
The original algorithm to compute the \seq{}s requires $O(m\log n+n\log^2 n)$ time~\cite{cohen1997}.
We then discuss a new yet simple algorithm to convert the \seq{}s to an FRT tree
only using $O(n\log n)$ time.
A similar approach was taken by Khan et al.~\cite{Khan2008} but their output is an implicit representation (instead of an tree) and under the distributed setting and it is not work-efficient without using the observations and tree representations introduced by Blelloch et al.\ in~\cite{Kanat}.\footnote{A simultaneous work by Friedrichs et al.\,proposed an $O(n \log^3 n)$ algorithm of this conversion (Lemma 7.2 in~\cite{Friedrichs2016}). }

Based on the algorithm to efficiently convert the \seq{}s to FRT trees, the time complexity of FRT-embedding construction is bottlenecked by the construction of the \seq{}s.
Our efficient approach contains two subroutines:
\begin{itemize}
  \item An efficient (approximate) single-source shortest-path algorithm, introduced in Section~\ref{sec:sssp}.  The algorithm has three properties: linear complexity, distance preservation, and the ordering property  (full definitions given in Section~\ref{sec:sssp}).  All three properties are required for the correctness and efficiency of constructing FRT-embedding.  Our algorithm is a variant of Dijkstra's algorithm with the priority queue implemented by a new data structure called \emph{\struct}.
  \item An algorithm to integrate the shortest-path distances into the construction of FRT trees, discussed in Section~\ref{sec:seq}.  When the diameter of the graph is $n^{O(1)}$, we show that an FRT tree can be built directly using the approximate distances computed by shortest-path algorithm.  The challenge is when the graph diameter is large, and we proposed an algorithm that computes the approximate \seq{}s of a graph by concatenating the distances that only use the edges within a relative range of $n^{O(1)}$.  Then we show why the approximate \seq{}s still yield valid FRT trees.
\end{itemize}
With these new algorithmic subroutines, we show that the time complexity of computing FRT-embedding can be reduced to $O(m\log n)$ w.h.p.\ for an undirected positively weighted graph with arbitrary edge weight.

The second contribution of this paper is to show a new application of optimal probabilistic tree embeddings.
We show that FRT trees are intrinsically Ramsey partitions (definition given in Section~\ref{sec:frt-ramsey}) with asymptotically tight bound, and can achieve even better (constant) bounds on distance approximation.
Previous construction algorithms of optimal Ramsey partitions are based on
hierarchical CKR partitions, namely, on each level, the partition is
individually generated with an independent random radius and new
random priorities.  In this paper, we present a new proof to show that
the randomness in each level is actually unnecessary, so that only one
single random permutation is enough and the ratio of radii in
consecutive levels can be fixed as 2.
Our FRT-tree construction algorithm therefore can be directly applied to a number of
different distance oracles that are based on Ramsey partitions and accelerates the
construction of these distance oracles.

\hide{
Lastly, since the $O(n\log^2 n+m\log n)$ version of our algorithm to construct FRT trees is simple,
we actually implement it by applying it to a practical distance oracle by simply
taking multiple FRT trees of a graph.  We show that using 32 FRT trees on a number of real-world graphs,
the worst approximation for all queries that we tested is 3.3, the average approximation is 1.4, and the geometric mean is less than 1.07.
}
\hide{
We show that using $O(\log n)$
FRT trees, this distance oracle costs $O(m\log^2 n)$
preprocessing time, $O(n\log n)$ space, providing $O(\log
c)$-approximation w.h.p., and responding each query in $O(\log n)$ time,
where $c$ is the KR-expansion~\cite{karger2002} of the graph.
}
\hide{
Although the approximation bound is not particularly theoretically
interesting since $\log c$ can be as large as $\Theta(\log n)$ on
high-girth graphs or even social networks that obey power-law, we
implemented and tested it on a number of different types of graphs
(since this new distance oracle is simple and extremely easy to code).
Our experiments show that the distance approximation does well in
practice.  Using a small number (32) of FRT trees on graphs with both
large and small $c$, this distance oracle on average provides a
worst-case 3.3-approximation on one set of our testing cases.  Also,
the average approximation between pairs is 1.37 on the worst-case type
of graphs (with largest $c$) and the geometric mean of average
approximation is 1.07 across all our testing graphs.  We ran the
experiments multiple times with different seeds for the random number
generator, and the fluctuations of the numbers are small.
}

\section{Preliminaries and Notations}
\label{sec:prelim}

Let $G=(V,E)$ be a weighted graph with edge weights $w: E\rightarrow
\mathbb{R}_+$, and $d(u,v)$ denote the shortest-path distance in $G$
between nodes $u$ and $v$.
Throughout this paper, we assume that
$\min_{x\neq y}d(x,y)=1$. Let $\Delta={\max_{x,y}d(x,y)\over
  \min_{x\neq y}d(x,y)}=\max_{x,y}d(x,y)$, the diameter of the graph
$G$.

\hide{A \emph{hierarchically well-separated tree} (HST)~\cite{bartal96,Bar98} is a rooted tree for which edges in the same level have the same weight, and length of consecutive edges from the root to leaves decrease by factor larger than $1$.}

In this paper, we use the single source shortest paths problem (SSSP)
as a subroutine for a number of algorithms. Consider a weighted graph
with $n$ vertices and $m$ edges, Dijkstra's algorithm~\cite{Dijk}
solves the SSSP in $O(m+n\log n)$ time if the priority queue of
distances is maintained using a Fibonacci heap~\cite{Fib}.

A premetric $(X,d_X)$ defines on a set $X$ and provides a function $d:X\times X\to \mathbb{R}$ satisfying $d(x,x)=0$ and $d(x,y)\ge 0$ for $x,y\in X$.
A metric $(X,d_X)$ further requires $d(x,y)=0$ iff $x=y$, symmetry $d(x,y)=d(y,x)$, triangle inequality $d(x,y)\le d(x,z)+d(z,y)$ for $x,y,z\in X$.
The shortest-path distances on a graph is a metric and is called the graph metric and denoted as $d_G$.

We assume all intermediate results of our algorithm have word size $O(\log n)$ and basic algorithmic operations can be finished within a constant time.
Then within the range of $[1,n^k]$, the integer part of natural logarithm of an integer and floor function of an real number can be computed in constant time for any constant $k$.  This can be achieved using standard table-lookup techniques (similar approaches can be found in Thorup's algorithm~\cite{thorup1997}).
The time complexity of the algorithms are measured using the random-access machine (RAM) model.


A result holds \emph{with high probability} (\textbf{w.h.p.}) for an input of size $n$ if it holds with
probability at least $1-n^{-c}$ for any constant $c>0$, over all possible random choices made
by the algorithm.

Let $[n]=\{1,2,\cdots,n\}$ where $n$ is a positive integer.

Let $(X,d_X)$ be a metric space. For $Y\subseteq X$, define $(Y,d_X)$
as the metric $d_X$ restricted to pairs of points in $Y$, and
$\diam(Y,d_X)=\max\{d_X(x,y)\mid x,y\in Y\}$.  Define $B_X(x,r)=\{y\in
X\mid d_X(x,y)\leq r\}$, the closed ball centered at point $x$ and
containing all points in $X$ at a distance of at most $r$ from $x$. A
partition $\mathcal{P}$ of $X$ is a set of subsets of $X$ such that, for every
$x\in X$, there is one and only one unique element in $\mathcal{P}$,
denoted as $\mathcal{P}(x)$, that contains $x$.

The KR-expansion constant~\cite{karger2002} of a given metric space $(X,d_X)$ is defined as the smallest value of $c\ge 2$ such that $\left|B_X(x,2r)\right|\leq c\cdot\left|B_X(x,r)\right|$
for all $x\in X$ and $r > 0$.
The KR-dimension~\cite{karger2002} (or the expansion dimension) of $X$ is defined as $\dim_{\smb{KR}}(X)=\log c$.

We recall a useful fact about random permutations~\cite{Seidel93}:
\begin{lemma}
\label{lemma:logn}
Let $\pi:[n]\rightarrow[n]$ be a permutation selected uniformly at random on $[n]$. The set $\{i\mid i\in[n],\pi(i)=\min\{\pi(j)\mid j=1,\cdots, i\}\}$ contains $O(\log n)$ elements both in expectation and with high probability.
\end{lemma}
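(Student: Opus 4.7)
The plan is to recast the quantity of interest as a sum of indicators and then invoke the classical theory of left-to-right minima (``records'') in a random permutation. Let $X_i = \mathbf{1}[\pi(i) = \min\{\pi(1),\ldots,\pi(i)\}]$ and let $R_n = \sum_{i=1}^n X_i$, which is exactly the cardinality of the set in the statement. The proof has two parts: bounding $\E[R_n]$, and upgrading this to a high-probability bound.

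For the expectation, I would argue by symmetry. Conditioned on the unordered set $\{\pi(1),\ldots,\pi(i)\}$, each of the $i!$ orderings of its elements is equally likely, so the minimum of this prefix sits in position $i$ with probability exactly $1/i$. Hence $\Pr[X_i=1]=1/i$ and $\E[R_n]=\sum_{i=1}^n 1/i = H_n = \ln n + O(1)$, which gives the $O(\log n)$ bound in expectation.

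For the high-probability bound, the key observation---originally due to R\'enyi---is that $X_1,\ldots,X_n$ are in fact \emph{mutually independent}. This is because $X_i$ depends only on the relative rank of $\pi(i)$ among $\pi(1),\ldots,\pi(i)$, and for a uniformly random permutation these relative ranks are independent and uniform on $\{1,\ldots,i\}$ respectively (one can check this by a straightforward counting argument on the joint probabilities $\Pr[X_{i_1}=1,\ldots,X_{i_k}=1]$ for any $i_1<\cdots<i_k$, which factor into $\prod_{j}1/i_j$). Given independence, a standard multiplicative Chernoff bound yields $\Pr[R_n \ge (1+\delta)H_n] \le \exp(-\delta^2 H_n/3)$ for any $\delta>0$. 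Since $H_n = \Theta(\log n)$, taking $\delta$ to be a sufficiently large constant drives this probability below $n^{-c}$ for any desired constant $c$, which establishes $R_n = O(\log n)$ w.h.p.

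The main obstacle is verifying the independence of the $X_i$'s, which may look surprising at first glance since the $\pi(i)$'s themselves are highly correlated; but once the ``relative rank'' viewpoint is taken, it becomes an elementary fact, and the concentration step is entirely routine.
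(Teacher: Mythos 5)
Your proof is correct and is the standard argument for this well-known fact (records in a random permutation); the paper itself offers no proof at all, simply citing it as a known result from Seidel's work on treaps and randomized search trees. One small technical point: the multiplicative Chernoff bound in the form $\Pr[R_n \geq (1+\delta)\mu] \leq \exp(-\delta^2\mu/3)$ is only valid for $0 < \delta \leq 1$; for the large constant $\delta$ you need here, the correct form is $\exp(-\delta^2\mu/(2+\delta))$ or the weaker $\exp(-\delta\mu/3)$, but either version still yields the stated $n^{-c}$ bound, so the conclusion stands.
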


\hide{
\myparagraph{Probabilistic tree embeddings}
Given a metric space $(X,d_X)$, the goal of probabilistic tree
embeddings is to find a distribution of dominating trees~\cite{Bartal98}
with a minimum distance distortion.  Let $T$ be a tree with weighted
edges and with leaves corresponding to the elements in the metric
space.  Denote $d_T(x,y)$ as the tree metric, which represents the
distance of the path connecting $x$ and $y$ in the tree.  A
distribution $\mathcal{D}$ of trees should follow the dominating
property and minimize the expected stretch $\psi$.  \hide{ so that for
  each tree $T\in \mathcal{D}$:
\begin{enumerate}
  \item $d_X(x,y)\leq d_T(x,y)$ for every $x,y\in X$;
  \item $\E[d_T(x,y)]\leq \psi\cdot d_X(x,y)$ for every $x,y\in X$,
\end{enumerate}}
The FRT algorithm~\cite{FRT} is an optimal algorithm with $\psi=O(\log n)$, and this bound is asymptotically tight for certain classes of finite metric spaces.
}

\paragraph{Ramsey partitions.}
Let $(X,d_X)$ be a metric space.  A hierarchical partition tree of $X$ is a
sequence of partitions $\{\mathcal{P}_k\}_{k=0}^\infty$ of $X$ such
that $\mathcal{P}_0=\{X\}$, the diameter of the partitions in each
level decreases by a constant $c>1$, and each level
$\mathcal{P}_{k+1}$ is a refinement of the previous level
$\mathcal{P}_k$.  A Ramsey partition~\cite{Ramsey} is a distribution
of hierarchical partition trees such that each vertex has a lower-bounded probability
of being sufficiently far from the partition boundaries in all
partitions $k$, and this gap is called the \emph{padded range} of a
vertex.  More formally:
\begin{definition}
\label{def:Ramsey}
An $(\alpha,\gamma)$-Ramsey partition of a metric space $(X,d_X)$ is a probability distribution over hierarchical partition trees $\mathcal{P}$ of $X$ such that for every $x\in X$:
$$\Pr\left[\forall k\in \mathbb{N},B_X\left(x,\alpha\cdot c^{-k}\Delta\right)\subseteq \mathcal{P}_k(x)\right]\geq|X|^{-\gamma}.$$
\end{definition}
An asymptotically tight construction of Ramsey partition where $\alpha=\Omega(\gamma)$ is provided by Mendel and Naor~\cite{Ramsey} using the Calinescu-Karloff-Rabani partition~\cite{CKR} for each level.

\paragraph{Approximate distance oracles.}
Given a finite metric space $(X,d_X)$, we want to support efficient
approximate pairwise shortest-distance queries.  Data structures to
support this query are called approximate distance oracles\hide{, a
  ``compact'' version of distances}.  A $(P,S,Q,D)$-distance oracle on
a finite metric space $(X,d_X)$ is a data structure that takes
expected time $P$ to preprocess from the given metric space, uses $S$
storage space, and answers distance query between points $x$ and $y$
in $X$ in time $Q$ satisfying $d_X(x, y)\leq d_O(x, y)\leq D\cdot d_X(x,
y)$, where $d_O(x, y)$ is the pairwise distance provided by the distance
oracle.

The concept of approximate distance oracles was first studied by
Thorup and Zwick~\cite{TZ}.  Their distance oracles are based on
graph spanners, and given a graph, a $(O(kmn^{1/k}), O(kn^{1+1/k}),
O(k), 2k-1)$-oracle can be created.  This was followed by many
improved results, including algorithms focused on distance oracles
that provide small stretches ($<3$)~\cite{baswana2006,agarwal2013},
and oracles that can report paths in addition to
distances~\cite{elkin2015}.  A recent result of
Chechik~\cite{chechik15} provides almost optimal space, stretch and
query time, but the construction time is polynomially slower than the
results in this paper, and some of the other work, when $k>2$.

For distance oracles that can be both constructed and queried
efficiently on a graph, a series of algorithms, including
Mendel-Naor's~\cite{Ramsey}, Wulff-Nilsen's~\cite{WN} and
Chechik's~\cite{SC}, all use the Ramsey partitions of a graph as an algorithmic building block
of the algorithm.


\section{An Approximate SSSP Algorithm}
\label{sec:sssp}

In this section we introduce a variant of Dijkstra's algorithm.
This is an efficient algorithm for single-source shortest paths (SSSP) with linear time complexity $O(m)$.
The computed distances are $\alpha$-\disPreserve{}:
\begin{definition}[$\alpha$-\disPreserve{}]
For a weighted graph $G=(V,E)$, the single-source distances $d(v)$ for $v\in V$ from the source node $s$ is $\alpha$-\disPreserve{}, if there exists a constant $0\le\alpha\le 1$ such that
$\alpha\, d_G(s,u)\le d(u)\le  d_G(s,u)$, and $d(v)-d(u)\le d_G(u,v)$, for every $u,v\in V$.
\end{definition}
$\alpha$-\disPreserve{} can be viewed as the triangle inequality on single-source distances (i.e.\ $d(u)+d_G(u,v)\ge d(v)$ for $u,v\in V$), and is required in many applications related to distances.
For example, in Corollary~\ref{cor:gabow} we show that using Gabow's scaling algorithm~\cite{gabow1985scaling} we can  compute a ($1-\epsilon$)-approximate SSSP using $O(m\log \epsilon^{-1})$ time.
Also in many metric problems including the contruction of optimal tree embeddings, distance preservation is necessary in the proof of the expected stretch, and such an example is Lemma~\ref{lem:approx-dis} in Section~\ref{sec:expected-stretch}.

The preliminary version we discussed in Section~\ref{sec:algo-prelim} limits edge weights in $[1,n^k]$ for a constant $k$, but with some further analysis in 
the full version of this paper
we can extend the range to $[1, n^{O(m)}]$.
This new algorithm also has two properties that are needed in the construction of FRT trees, while no previous algorithms achieve them all:

\begin{enumerate}
  \item ($\alpha$-\disPreserve{}) The computed distances from the source $d(\cdot)$ is $\alpha$-\disPreserve{}.
  \item (Ordering property and linear complexity) The vertices
are visited in order of distance $d(\cdot)$, and the time to compute the first $k$ distances is bounded by $O(m')$ where $m'$ is the sum of degrees from these $k$ vertices.
\end{enumerate}

The algorithm also works on directed graphs, although this is not used in the FRT construction.

Approximate SSSP algorithms are well-studied~\cite{thorup1997,klein1997,cohen2000,pettie2005,miller2015}.
In the sequential setting, Thorup's algorithm~\cite{thorup1997} compute single-source distances on undirected graphs with integer weights using $O(n+m)$ time.
Nevertheless, Thorup's algorithm does not obey the ordering property since it uses a hierarchical bucketing structure and does not visit vertices in an order of increasing distances, and yet we are unaware of a simple argument to fix this.
Other algorithms are either not work-efficient (i.e.\ super-linear complexity) in sequential setting, and / or violating distance preservation.

\begin{theorem}\label{thm:sp}
For a weighted directed graph $G=(V,E)$ with edge weights between $1$ and $n^{O(1)}$,
a $(1/4)$-\disPreserve{} single-source shortest-path distances $d(\cdot)$ can be computed, such that the distance to the $k$-nearest vertices $v_1$ to $v_k$ by $d(\cdot) $requires $O(\sum_{i=1}^{k}{\mb{degree}(v_i)})$ time.
\end{theorem}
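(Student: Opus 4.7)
The plan is to run a Dijkstra-style algorithm in which the Fibonacci-heap priority queue is replaced by the \struct{}: an array of $O(\log n)$ levels of buckets whose widths grow geometrically. Since edge weights lie in $[1,n^{O(1)}]$, all tentative distances fit in $O(\log n)$ bits, so the structure uses $O(\log n)$ levels, with level $i$ containing buckets of width $2^i$. Each unpopped vertex $v$ with current tentative distance $\tilde d(v)$ is stored in the unique level-$i$ bucket covering $\tilde d(v)$, where $i=\lfloor\pw\tilde d(v)\rfloor$ so that the bucket width is at most $\tilde d(v)$. Pointers across non-empty buckets allow the next bucket to be located in amortized $O(1)$ time, and the level index and bucket address of a vertex can be computed in $O(1)$ via the word-RAM assumptions from Section~\ref{sec:prelim}.

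The main loop mirrors Dijkstra's: repeatedly extract a vertex $v$ from the non-empty bucket $B=[L,L+2^i)$ with the smallest lower endpoint, set its reported distance $d(v):=L$, and relax every outgoing edge of $v$; each relaxation moves its neighbor to a (possibly smaller) bucket in constant time. The ordering property follows by construction, since reported distances are the bucket lower endpoints, which are non-decreasing in extraction order. For $(1/4)$-\disPreserve{}, the bucket-width rule yields $L\le\tilde d(v)\le 2L$, and combining this with $d_G(s,v)\le\tilde d(v)$ (since $\tilde d(v)$ is the length of a concrete $s$-$v$ path) and with the observation that when $v$ is popped every unpopped vertex has tentative distance at least $L$, I obtain $\tfrac{1}{4}d_G(s,v)\le d(v)\le d_G(s,v)$ after accounting for the factor-of-two slack incurred when $v$ has not been fully relaxed by the time it is extracted.

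The running-time bound follows from an amortized argument that charges all bucket operations to edge relaxations. A vertex $v$ changes its level or bucket only when its tentative distance strictly decreases, which in turn requires the relaxation of an incoming edge of $v$; thus the total number of bucket moves is at most the number of edge relaxations performed. Together with the amortized $O(1)$ extract-min supplied by the pointer structure and the $O(\mb{degree}(v_i))$ cost of relaxing the edges out of $v_i$, the total work done up through the extraction of the first $k$ vertices is $O\bigl(\sum_{i=1}^{k}\mb{degree}(v_i)\bigr)$, matching the theorem's bound.

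The main obstacle I expect is proving the triangle inequality $d(v)-d(u)\le d_G(u,v)$ for \emph{arbitrary} (not necessarily edge-adjacent) pairs. In classical Dijkstra this is immediate because reported distances equal true graph distances, but here merely combining $d(v)\le d_G(s,v)$ with the metric triangle inequality only gives the weaker bound $d(v)\le 4d(u)+d_G(u,v)$. The proof must therefore exploit the bucket structure more directly, for instance by showing that for any shortest $u$-$v$ path $u=w_0,\dots,w_k=v$ the extraction order of the $w_j$'s is consistent with the relaxation order, so that $d(w_{j+1})\le d(w_j)+w(w_j,w_{j+1})$ telescopes to the desired inequality. Tuning the bucket widths, the placement rule, and the definition of the reported distance simultaneously so that both the $1/4$ factor and the triangle inequality hold without inflating the running time is where I expect the bulk of the technical work to lie.
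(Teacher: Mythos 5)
You have correctly isolated where the difficulty lies, but the scheme you sketch does not get around it and in fact fails both non-trivial parts of the $(1/4)$-distance-preserving requirement. If you report $d(v):=L$, the lower endpoint of the bucket containing the tentative distance $\tilde d(v)$, and you relax out of $v$ using the true $\tilde d(v)$, then $\tilde d(v)=d_G(s,v)$ at extraction and $L\ge d_G(s,v)/2$, so the multiplicative bound is fine; but then for two arbitrary vertices $u,v$ you only get $d(v)-d(u)=L_v-L_u\le d_G(s,v)-d_G(s,u)/2\le d_G(s,u)/2+d_G(u,v)$, which does not imply $d(v)-d(u)\le d_G(u,v)$. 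If instead you relax using the snapped values $L_j$, the triangle inequality does telescope, but the snapping error now compounds along a path: on a unit-weight path from $s$ the snapped value gets stuck (e.g.\ $0\to1\to2\to2\to2\cdots$ once the bucket widths reach $2$), so $d(v)$ can be a vanishing fraction of $d_G(s,v)$ and the $1/4$ factor is lost entirely. There is no single rule for ``snap the final distance to its bucket endpoint'' that simultaneously preserves both properties, which is exactly the tension you flagged at the end of your write-up.

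The paper escapes this by never rounding \emph{distances}; it rounds each \emph{edge weight} at relaxation time. The leveled bucket structure is laid out as a binary tree, and the ``frontier'' is the set of left-most permissible buckets along the path from the current bucket to the root. When $u$ relaxes $v$ via edge $e$, one first computes $r=\lfloor\log_2(w_e+1)\rfloor$ and then (Algorithm~\ref{algo:bucket}) inserts $v$ into a frontier bucket whose offset $w_e'$ from the current bucket satisfies $w_e/4\le w_e'\le w_e$ (Lemma~\ref{lemma:woverfour}); the case analysis in that lemma---on whether the path string goes left or right across levels $r-1,r,r+1$---is precisely what guarantees a suitable frontier bucket always exists. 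The resulting $d(\cdot)$ is then an \emph{exact} Dijkstra distance in a graph with per-edge weights $w_e'\in[w_e/4,w_e]$, so $d(v)\le d_G(s,v)$ and $d(v)\ge d_G(s,v)/4$ are immediate, and the triangle inequality follows from telescoping $d(w_{j+1})\le d(w_j)+w_{e_j}'\le d(w_j)+w_{e_j}$ along any $u$--$v$ path exactly as you anticipated---but it works only because the per-edge error is a fixed constant factor rather than an accumulation of per-vertex snapping losses. The constant-time cost (Lemma~\ref{lemma:cost-lb}) also does not come from a simple amortization as in your write-up; finding the correct frontier bucket requires a path-string computation across $O(\log n)$ levels, which the paper makes $O(1)$ by precomputing table-lookup answers on blocks of $\Theta(\log n)$ consecutive levels.
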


The algorithm also has the two following properties.
We discuss how to (1) extend the range of edge weights to $n^{O(m)}$, and the cost to compute the $k$-nearest vertices is $O(\log_n{d(v_k)}+\sum_{i=1}^{k}{\mb{degree}(v_i)})$ where $v_1$ to $v_k$ are the $k$ nearest vertices (Section~\ref{sec:algo-ext}); and (2) compute $(1+\epsilon)$-distance-preserving shortest-paths for an arbitrary $\epsilon>0$:

\begin{corollary}\label{cor:gabow}
For a graph $G = (V,E)$ and any source $s \in V$, $(1-\epsilon)$-distance-preserving approximate distances $d(v)$ for $v \in V$ from $s$, can be computed by repeatedly using the result of Theorem~\ref{thm:sp} $O(\log \epsilon^{-1})$ rounds, which leads to $O(m log \epsilon^{-1})$ when edge weights are within $n^{O(1)}$.

$(1-\epsilon)$-distance-preserving shortest-paths for all vertices can be computed by repeatedly using Theorem~\ref{thm:sp} $O(\log \epsilon^{-1})$ rounds.
\end{corollary}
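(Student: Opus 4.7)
The plan is a Gabow-style iterative refinement: apply Theorem~\ref{thm:sp} once to obtain a crude $(1/4)$-preserving estimate, then repeatedly shrink the relative error by running the same algorithm on a reweighted residual graph, using the previous estimate as a potential function.

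To start, run Theorem~\ref{thm:sp} on $(G,s)$ to obtain $d_0(\cdot)$ satisfying $d_G(s,v)/4 \le d_0(v) \le d_G(s,v)$ and $d_0(v)-d_0(u) \le d_G(u,v)$. For round $i \ge 1$, define reduced edge weights $w_i(u,v) := w(u,v) + d_{i-1}(u) - d_{i-1}(v)$. These are non-negative by the one-sided triangle inequality on $d_{i-1}$, and by telescoping the $w_i$-length of any $s$-to-$v$ path equals its $w$-length minus $d_{i-1}(v)$, so the shortest-path distance from $s$ to $v$ in the reweighted graph $G_i$ equals the residual $r_i(v) := d_G(s,v) - d_{i-1}(v)$. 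Apply Theorem~\ref{thm:sp} to $(G_i,s)$ to obtain $(1/4)$-preserving distances $\tilde r_i(\cdot)$ and set $d_i(v) := d_{i-1}(v) + \tilde r_i(v)$.

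A short calculation then shows that if $d_{i-1}$ is $\alpha_{i-1}$-preserving then $d_i$ is $\alpha_i$-preserving with $\alpha_i = 3\alpha_{i-1}/4 + 1/4$: the upper bound $d_i \le d_G$ follows from $\tilde r_i \le r_i$, the lower bound from $\tilde r_i(v) \ge r_i(v)/4$, and the one-sided triangle inequality $d_i(v) - d_i(u) \le d_G(u,v)$ rewrites as $\tilde r_i(v) - \tilde r_i(u) \le d_{G_i}(u,v) = d_G(u,v) + d_{i-1}(u) - d_{i-1}(v)$, which holds because $\tilde r_i$ is distance-preserving in $G_i$. Hence $1-\alpha_k = (3/4)^{k+1}$, so $k = O(\log \epsilon^{-1})$ rounds suffice; each round calls Theorem~\ref{thm:sp} once at cost $O(m)$, giving total time $O(m \log \epsilon^{-1})$.

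The main technical obstacle is that the reduced weights $w_i$ can vanish on edges where the potential $d_{i-1}$ is tight and, more generally, can fall outside the $[1, n^{O(1)}]$ range required by Theorem~\ref{thm:sp}. I would resolve this by invoking the extended form of Theorem~\ref{thm:sp} promised in Section~\ref{sec:sssp} that supports edge weights in $[1, n^{O(m)}]$: rescale $w_i$ by a factor polynomial in $n/\epsilon$ and round up to the next integer, so that positive residual weights become integers in the permitted range, while the per-edge rounding error sums along any path to an $O(\epsilon)$ multiplicative slack that is absorbed into the final $(1-\epsilon)$ bound.
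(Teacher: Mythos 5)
Your proposal is correct and follows the same strategy as the paper: a Johnson-style potential reweighting using the accumulated approximate distances, where distance preservation guarantees non-negative reduced edge weights and the residual distance to each vertex shrinks by a factor of $3/4$ per round, giving $O(\log\epsilon^{-1})$ rounds. You also flag and address the edge-weight-range issue (reduced weights can be zero or tiny), which the paper's proof glosses over by implicitly relying on its extended algorithm that tolerates near-zero edges; your rescale-and-round fix is a reasonable alternative.
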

\begin{proof}
To get $(1-\epsilon)$-distance-preserving shortest-paths, we can use the algorithm in Theorem~\ref{thm:sp} $O(\log \epsilon^{-1})$ rounds repeatedly, and the output shortest-paths in $i$-th round are denoted as $d_i(\cdot)$.  
The first round is run on the input graph.
In the $i$-th round for $i>1$, each edge $e$ from $u$ to $v$ in the graph is reweighted as $w_e+\sum_{j=1}^i(d_j(u)-d_j(v))$.
Since we know that the output is \disPreserve{} for the input graph in each round, inductively we can check that all reweighted edges are positive.  
This also indicates that $\sum_{j=1}^i d_j(u)\le d(s,u)$.

We now show that, from source node $s$, $d(s,u)-\sum_{j=1}^i d_j(u)\le (1-\alpha)^i \cdot d(s,u)$.  This is because, by running the shortest-path algorithm that is $\alpha$-\disPreserve{}, $d_i(u)\ge \alpha \cdot(d(s,u)-\sum_{j=1}^{i-1} d_j(u))$ based on the way the graph is reweighted.
Therefore, the summation of all $d_i(u)$ is at least $(1-\epsilon)\,d(s,u)$ after $O(\log \epsilon^{-1})$ rounds.
\end{proof}

\subsection{Algorithm Details}\label{sec:algo-prelim}

The key data structure in this algorithm is a \struct{} shown in Figure~\ref{fig:bucket} that implements the priority queue in Dijkstra's algorithm.
With the \struct{}, each \mf{Decrease-Key} or \mf{Extract-Min} operation takes constant time.
Given the edge range in $[1,n^k]$, this structure has $l=\lceil (1+k)\log_2{n}\rceil$ levels, each level containing a number of buckets corresponding to the distances to the source node.
In the lowest level (level 1) the difference between two adjacent buckets is $2$.

At anytime only one of the buckets in each level can be non-empty: there are in total $l$ active buckets to hold vertices, one in each level.
The active bucket in each level is the left-most bucket whose distance is larger than that of the current vertex being visited in our algorithm.
We call these active buckets the \textbf{\emph{frontier}} of the current distance, and they can be computed by the \textbf{\emph{path string}}, which is a 0/1 bit string corresponding to the path from the current location to higher levels (until the root), and 0 or 1 is decided by whether the node is the left or the right child of its parent.
For clarity, we call the buckets on the frontier \textbf{\emph{frontier buckets}}, and the ancestors of the current bucket \textbf{\emph{ancestor buckets}} (can be traced using the path string).
For example, as the figure shows, if the current distance is 4, then the available buckets in the first several levels are the buckets corresponding to the distances $6, 5, 11, 7, $ and so on.
The ancestor bucket and the frontier bucket in the same level may or may not be the same, depending on whether the current bucket is the left or right subtree of this bucket.
For example, the path string for the current bucket with label 4 is $0100$ and so on, and ancestor buckets correspond to $4, 5, 3, 7$ and so on.
It is easy to see that given the current distance, the path string, the ancestor buckets, and the frontier buckets can be computed in $O(l)$ time---constant time per level.

Note that since only one bucket in each level is non-empty, the whole structure need not to be build explicitly: we store one linked list for each level to represent the only active bucket in the memory ($l$ lists in total), and use the current distance and path string to retrieve the location of the current bucket in the structure.

\begin{figure}
\centering
  \includegraphics[width=.8\columnwidth]{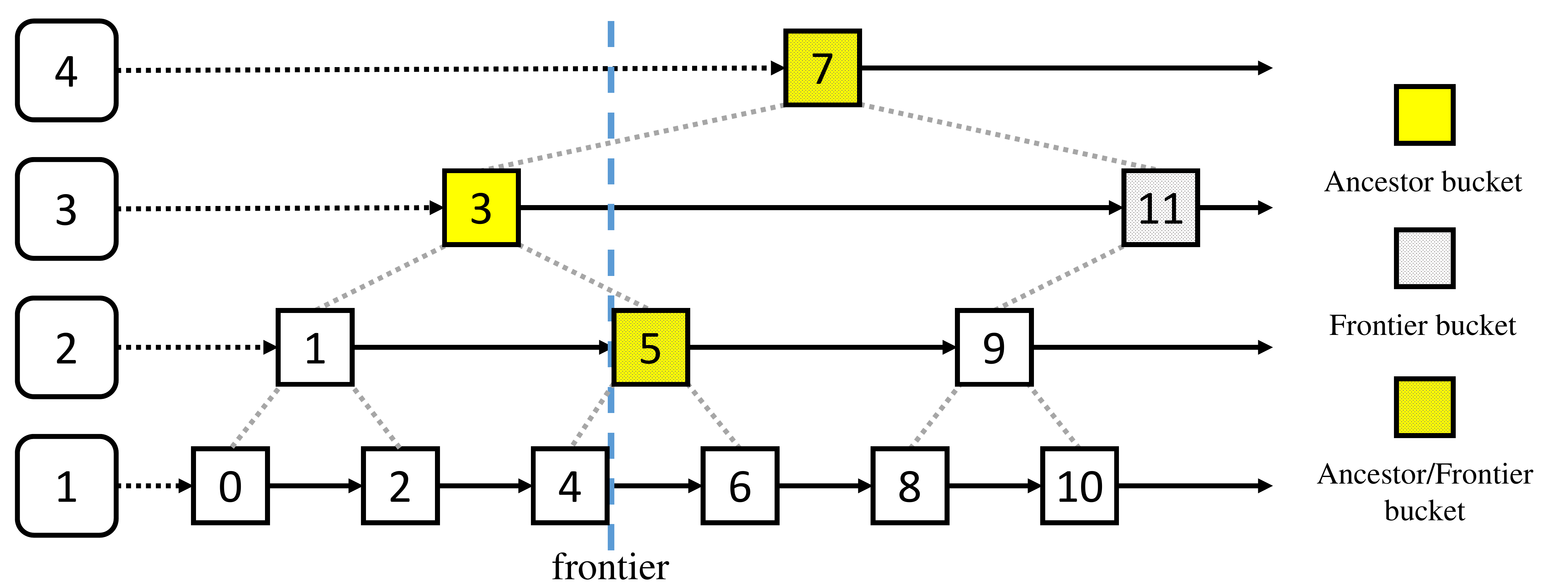}
  \caption{An illustration of a \struct{} with the lowest 4 levels, and the current visiting bucket has distance 4.  Notice that our algorithm does not insert vertices to the same level as the current bucket (i.e.\ bucket 6).}\label{fig:bucket}
  \vspace{-1em}
\end{figure}

With the \struct{} acting as the priority queue, we can run standard Dijkstra's algorithm.
The only difference is that, to achieve linear cost for an SSSP query, the operations of \mf{Decrease-Key} and \mf{Extract-Min} need to be redefined on the \struct{}.

Once the relaxation of an edge succeeds, a \mf{Decrease-Key} operation for the corresponding vertex will be applied.
In the \struct{} it is implemented by a \mf{Delete} (if the vertex is added before) followed by an \mf{Insert} on two frontier buckets respectively.
The deletion is trivial with a constant cost, since we can maintain the pointer from each vertex to its current location in the \bucket{}.
We mainly discuss how to insert a new tentative distance into the \bucket{}.
When vertex $u$ successfully relaxes vertex $v$ with an edge $e$, we first round down the edge weight $w_e$ by computing $r=\lfloor\log_2{(w_e+1)}\rfloor$. 
Then we find the appropriate frontier bucket $B$ that the difference of the distances $w_e'$ between this bucket $B$ and the current bucket is the closest to (but no more than) $w_r=2^r-1$, and insert the relaxed vertex into this bucket.
The constant approximation for this insertion operation holds due to the following lemma:
\begin{lemma}\label{lemma:woverfour}
For an edge with length $w_e$, the approximated length $w_e'$, which is the distance between the inserted bucket $B$ and the current bucket, satisfies the following inequality: $w_e/4\le w_e'\le w_e$.
\end{lemma}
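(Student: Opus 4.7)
The plan is to establish the two inequalities separately. The upper bound $w_e'\le w_e$ is immediate: since $r=\lfloor\log_2(w_e+1)\rfloor$, we have $2^r\le w_e+1$, hence $w_r=2^r-1\le w_e$, and the algorithm selects a frontier bucket with $w_e'\le w_r\le w_e$.

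For the lower bound, note that $w_e+1<2^{r+1}$ implies $w_e/4<2^{r-1}$. Thus it suffices to exhibit a frontier bucket at some level $i\ge 2$ whose distance $d_i$ from the current bucket lies in $[2^{r-1},\,2^r-1]$: the algorithm, which picks the largest $d_i\le w_r$, will then return $w_e'\ge 2^{r-1}>w_e/4$. The key observation is that a bucket with label $L$ sits at level $i$ if and only if $L+1$ is an odd multiple of $2^{i-1}$, so if $c$ denotes the current bucket's label and $C=c+1$ (which is odd because every level-$1$ label is even), the level-$i$ frontier bucket corresponds to the smallest odd multiple of $2^{i-1}$ strictly exceeding $C$. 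Splitting on whether bit $i-1$ of $C$ is $0$ or $1$ yields the closed form
\[
d_i=\begin{cases}
2^{i-1}-(C\bmod 2^{i-1}), & \text{if bit $i-1$ of $C$ is $0$,}\\
2^{i}-(C\bmod 2^{i-1}), & \text{if bit $i-1$ of $C$ is $1$.}
\end{cases}
\]

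With this formula in hand, I would case-analyze the bits of $C$ at positions $\ge r-1$. If bit $r-1$ is $1$ with $C\bmod 2^{r-1}>0$, or if bit $r-1$ is $0$ with $C\bmod 2^{r-1}=0$, then $d_r$ itself already lies in $[2^{r-1},\,2^r-1]$ and we are done at level $r$. In the remaining two cases $d_r$ either equals $2^r$ (too large) or is strictly less than $2^{r-1}$ (too small), and I would instead let $j$ be the smallest index $\ge r$ with bit $j$ of $C$ equal to $0$ and evaluate $d_{j+1}$; expanding $C\bmod 2^j$ using the run of $1$-bits at positions $r,\ldots,j-1$ collapses the expression to either exactly $2^{r-1}$ or to $2^r-(C\bmod 2^{r-1})\in[2^{r-1}+1,\,2^r-1]$. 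The chosen level is always at least $2$ because $C$ is odd. The main obstacle is the mechanical bookkeeping in this telescoping step, but each case reduces to a one-line computation with the closed form above, and combining $w_e'\ge 2^{r-1}$ with $w_e<2^{r+1}$ finally yields $w_e'>w_e/4$.
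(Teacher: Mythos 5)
Your proof is correct in its conclusion and takes a genuinely different route. The paper's own proof works directly through the three branches of Algorithm~\ref{algo:bucket} (first by whether the current level $b$ is $>r$, $=r$, or $<r$, then by the path-string bits at levels $r-1$ through $r+1$) and verifies for each returned level that the distance to the chosen frontier bucket is between $2^{r-1}$ and $2^r-1$. You instead reason geometrically from the bucket layout itself: you derive a closed form for $d_i$, the distance from the current bucket to the level-$i$ frontier bucket, using the characterization that a label $L$ sits at level $i$ iff $L+1$ is an odd multiple of $2^{i-1}$, and then do a bit analysis of $C=c+1$ to exhibit a level with $d_i\in[2^{r-1},2^r-1]$. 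This is cleaner and more self-contained, since it does not require trusting that Algorithm~\ref{algo:bucket} correctly implements the ``closest-but-not-over'' rule; and the formula for $d_i$ is easy to check against the example in Figure~\ref{fig:bucket} (current label $4$, $C=5$, giving $d_1=2,\,d_2=1,\,d_3=7,\,d_4=3$, matching frontier labels $6,5,11,7$). One imprecision to repair: you assert that $C$ ``is odd because every level-$1$ label is even,'' but the current bucket is not always at level $1$ --- Algorithm~\ref{algo:bucket} explicitly takes the current level $b$ as input and its first branch is $b>r$, so $b$ genuinely varies. Luckily your formula for $d_i$ and the bit case analysis hold for arbitrary $C$, and one can check that the level you exhibit ($r$ or $j+1$) never collides with the actual current level $b$: when $b<r$ the trailing-zero count of $C$ is $b-1<r-1$ so you land in easy case~1 or remaining case~2 and pick a level $\ge r>b$; when $b=r$ you land in remaining case~1 and pick $j+1>r=b$; when $b>r$ you land in easy case~2 and pick $r<b$. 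So the argument survives, but the parenthetical about $C$ being odd and the closing sentence ``the chosen level is always at least $2$ because $C$ is odd'' should be replaced by this check that the chosen level always avoids $b$, which is the actual constraint the data structure imposes (see the caption of Figure~\ref{fig:bucket}). With that fix the proof is a valid and arguably preferable alternative to the paper's case-check of Algorithm~\ref{algo:bucket}.
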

\begin{proof}
After the rounding, $w_r=2^r-1=2^{\lfloor\log_2{(w_e+1)}\rfloor}-1$ falls into the range of $[w_e/2,w_e]$.
We now show that there always exists such a bucket $B$ on the frontier that the approximated length $w_e'$ is in $[w_r/2,w_r]$.

We use Algorithm~\ref{algo:bucket} to select the appropriate bucket for a certain edge, given the current bucket level and the path string.
The first case is when $b$, the current level, is larger than $r$.
In this case all the frontier buckets on the bottom $r$ levels form a left spine of the corresponding subtree rooted by the right child of the current bucket, so picking bucket in the $r$-th level leads to $w_e'=2^{r-1}$, and therefore $w_e/4<w_e'\le w_e$ holds.
The second case is when $b\le r$, and the selected bucket is decided based on the structure on the ancestor buckets from the $(r+1)$-th level to $(r-1)$-th level, which is one of the three following cases.
\begin{itemize}
  \item The simplest case ($b<r$, line~\ref{line:case1}) is when the ancestor bucket in the $(r-1)$-th level is the right child of the bucket in the $r$-th level.
      In this case when we pick the bucket in level $r$ since the distance between two consecutive buckets in level $r$ is $2^r$, and the distance from the current bucket to the ancestor bucket in $r$-th level is at most $\sum_{i=1}^{r-1}{2^{i-1}}<2^{r-1}$.
      The distance thus between the current bucket and the frontier bucket in level $r$ is $w_e'>2^r-2^{r-1}=2^{r-1}>w_e/4$.
  \item The second case is when either $b=r$ and the current bucket is the left child (line~\ref{line:case3}), or $b<r$ and the ancestor bucket in level $r-1$ is on the left spine of the subtree rooted at the ancestor bucket in level $r+1$ (line~\ref{line:case2}).
      Similar to the first case, picking the frontier bucket in the $(r+1)$-th level (which is also an ancestor bucket) skips the right subtree of the bucket in $r$-th level, which contains $2^{r-1}-1\ge w_e/4$ nodes.
  \item The last case is the same as the second case expect that the level-$r$ ancestor bucket is the right child of level-$(r+1)$ ancestor bucket.
      In this case we will pick the frontier bucket that has distance $2^{r-1}$ to the ancestor bucket in level $r$, which is the parent of the lowest ancestor bucket that is a left child and above level $r$.  In this case the approximated edge distance is between $2^{r-1}$ and $2^r-1$.
\end{itemize}
Combining all these cases proves the lemma.
\end{proof}

\begin{algorithm}[!tp]
\caption{Finding the appropriate bucket}
\label{algo:bucket}
\KwIn{Current bucket level $b$, rounded edge length $2^r-1$ and path string.}
\KwOut{The bucket in the frontier (the level is returned).}
    \vspace{0.5em}
    Let $r'$ be the lowest ancestor bucket above level $r$ that is a left child\\
    \DontPrintSemicolon
    \If{$b> r$} {\Return {$r$}}
    \ElseIf {$b=r$} { \lIf {current bucket is left child} {\label{line:case3}\Return $r+1$} \lElse {\label{line:case4}\Return $r'+1$} }
    \Else {
            \Switch {the branches from $(r+1)$-th level to $(r-1)$-th level in the path string} {
                \Case {left-then-right or right-then-right\label{line:case1}}
                    {\Return $r$}
                \Case {left-then-left\label{line:case2}}
                    {\Return $r+1$}
                \Case {right-then-left}
                    {\Return $r'+1$}
        }
    }
\end{algorithm}

We now explain ehe \mf{Extract-Min} operation on the \bucket{}.
We will visit vertex in the current buckets one by one, so each \mf{Extract-Min} has a constant cost.
Once the traversal is finished, we need to find the next closest non-empty frontier.
\begin{lemma}\label{lemma:cost-lb}
\mf{Extract-Min} and \mf{Decrease-Key} on the \bucket{} require $O(1)$ time.
\end{lemma}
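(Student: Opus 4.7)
My plan is to prove the two operations separately, since they rely on different aspects of the \struct{}, with the path string stored in a single $O(\log n)$-bit machine word being the common tool that makes constant time possible.

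For \mf{Decrease-Key}, I would argue directly. The operation is a \mf{Delete} followed by an \mf{Insert}. The delete is $O(1)$ because every vertex stores a pointer into the doubly-linked list of its current frontier bucket, so we can unlink it in constant time. For the insert, I would verify that each of the three sub-steps is $O(1)$: (i) computing the rounded edge length $r=\lfloor\log_2(w_e+1)\rfloor$, which is $O(1)$ by the constant-time floor-logarithm primitive guaranteed by the RAM-model assumption in Section~\ref{sec:prelim}; (ii) executing Algorithm~\ref{algo:bucket}, which is a case analysis on the current bucket level $b$ versus $r$ and on a constant number of bits of the path string around levels $r{-}1$, $r$, $r{+}1$, together with the location of ``the lowest ancestor above level $r$ that is a left child''---the latter being the lowest set bit of a masked word, computable in $O(1)$; and (iii) prepending the vertex to the linked list representing the target frontier bucket, which is $O(1)$.

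For \mf{Extract-Min}, if the current bucket is non-empty I would just dequeue its head in $O(1)$. The interesting subcase is when the current bucket becomes empty and we must advance to the frontier bucket of smallest distance. The plan is to maintain, alongside the path string, an $l$-bit bitmap recording which frontier levels currently hold a non-empty bucket, kept in a single machine word. The key observation I would prove is that the \emph{relative order} of the $l$ frontier distances is a function of the path string alone (it is determined by which ancestor in each level is a left versus a right child), so locating the minimum non-empty frontier amounts to reading off the lowest-priority set bit under an ordering that depends only on the path string. I would implement this via a precomputed table indexed by (path string, bitmap); the table has $n^{O(1)}$ entries and is filled in $O(n)$ preprocessing, giving $O(1)$ per query. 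Advancing the current bucket then costs a constant number of bit updates to the path string, the bitmap, and the current distance $D$.

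The main obstacle is this second part: ensuring that, when the current bucket is exhausted, finding and moving to the next non-empty frontier bucket is truly $O(1)$ rather than $O(l) = O(\log n)$. The naive implementation scans the $l$ frontier lists, so the argument must exploit the word-RAM model to do the scan in parallel, either via the table-lookup scheme above or via a standard ``lowest set bit under a reordering permutation'' trick on a single word. Everything else in the proof---the correctness of the level selected by Algorithm~\ref{algo:bucket} and the bookkeeping on the linked lists---is routine given the structural lemmas already in place.
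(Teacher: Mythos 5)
Your decomposition is in the right spirit (linked-list bookkeeping is $O(1)$, and the navigational part is reduced to a word-level lookup problem), and the Decrease-Key argument via direct bit manipulation of the path string is a perfectly reasonable alternative to what the paper does. But the Extract-Min half has a genuine gap in the table-lookup step, and it is exactly the step the paper is at pains to handle carefully. You propose a single table indexed by the pair (path string, bitmap). Each of those is an $l$-bit object with $l=\lceil(1+k)\log_2 n\rceil$, so the table has $2^{2l}=n^{\Theta(1+k)}$ entries; this is strictly super-linear in $n$ already for $k\geq 1$, and your stated ``$O(n)$ preprocessing'' is internally inconsistent with ``$n^{O(1)}$ entries'' -- one cannot even touch that many entries in $O(n)$ time, let alone afford the space. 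A side issue of the same flavor affects the ``lowest set bit of a masked word'' trick: the path string has $(1+k)\log_2 n$ bits and thus does not generally fit in a single $O(\log n)$-bit word, so the word operation you invoke is not available in one shot.

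The missing idea is the paper's chunking: partition the $l$ levels into blocks of $\lfloor(1-\epsilon')(\log_2 n)/2\rfloor$ consecutive levels each, precompute one small table per block for the restriction of (path string, frontier-emptiness bitmap) to that block, and answer a query by consulting the $l/\lfloor(1-\epsilon')(\log_2 n)/2\rfloor=O(k)$ block tables in sequence (and combining their answers, e.g., taking the first block that reports a candidate). Each block table then has $\bigl(2^{\lfloor(1-\epsilon')(\log_2 n)/2\rfloor}\bigr)^2=o(n)$ entries, the total preprocessing is $o(n)$, and a query is $O(k)=O(1)$. Your ``key observation'' that the relative order of the frontier distances is a function of the path string alone is correct and is implicitly what makes the lookup well-defined, but without the chunking the construction does not meet the stated space and time budget.
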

\begin{proof}
We have shown that the modification on the linked list for each operation requires $O(1)$ time.
A na\"ive implementation to find the bucket in \mf{Decrease-Key} and \mf{Extract-Min} takes $O(l)=O(\log n)$ time, by checking all possible frontier buckets.
We can accelerate this look-up using the standard table-lookup technique.
The available combinations of the input of \mf{Decrease-Key} are $n^{k+1}$ (total available current distance) by $l=O(k\log n)$ (total available edge distance after rounding), and the input combinations of \mf{Extract-Min} are two $\lceil \log_2{n^{k+1}}\rceil$ bit strings corresponding to the path to the root and the emptiness of the buckets on the frontier.
We therefor partition the \bucket{} into several parts, each containing $\lfloor (1-\epsilon')(\log_2{n})/2\rfloor$ consecutive levels (for any $0<\epsilon'<1$).
We now precompute the answer for all possible combinations of path strings and edge lengths, and (1) the sizes of look-up tables for both operations to be $O((2^{\lfloor (1-\epsilon')(\log_2{n})/2\rfloor})^2)=o(n)$, (2) the cost for brute-force preprocessing to be $O((2^{\lfloor (1-\epsilon')(\log_2{n})/2\rfloor})^2\log n)=o(n)$, and (3) the time of either operation of \mf{Decrease-Key} and \mf{Extract-Min} to be $O(k)$, since each operation requires to look up at most $l/\lfloor (1-\epsilon')(\log_2{n})/2\rfloor=O(k)$ tables.
Since $k$ is a constant, each of the two operations as well takes constant time.
The update of path string can be computed similarly using this table-lookup approach.
As a result, with $o(n)$ preprocessing time, finding the associated bucket for \mf{Decrease-Key} or \mf{Extract-Min} operation uses $O(1)$ time.
\end{proof}

We now show the three properties of the new algorithm: linear complexity, \disPreserve{}, and the ordering property.

\begin{proof}[Proof of Theorem~\ref{thm:sp}]
Here we show the algorithm satisfies the properties in Theorem~\ref{thm:sp}.
Lemma~\ref{lemma:cost-lb} proves the linear cost of the algorithm.
Lemma~\ref{lemma:woverfour} shows that the final distances is $\alpha$-\disPreserve{}.
Lastly, since this algorithm is actually a variant of Dijkstra's algorithm with the priority implemented by the \struct{}, the ordering property is met, although here the $k$-nearest vertices are based on the approximate distances instead of real distances.
\end{proof}

\subsection{Extension to greater range for edge weights}\label{sec:algo-ext}

We now discuss how to extend the range of edge weight to $[1,n^{O(m)}]$.
For the larger range of edge weights, we extend the \bucket{} to contains $O(m\log n)$ levels in total.
Note that at anytime only $O(m)$ of them will be non-empty, so the overall storage space is $O(m)$.
In this case there exists an extra cost of $O(\log_n{d(v_k)})$ when computing the shortest-paths to the $k$-nearest vertices.
This can be more than $O(\sum_{i=1}^{k}{\mb{degree}(v_i)})$, so for the FRT-tree construction in the next section, we only call SSSP queries with edge weight in a relative range of $n^{O(1)}$.

The challenge of this extension here is that, the cost of table-lookup for \mf{Decrease-Key} and \mf{Extract-Min} now may exceed a constant cost.
However, the key observation is that, once the current distance from the source to the visiting vertex is $d$, all edges with weight less than $d/n$ can be considered to be 0.
This is because the longest path between any pair of nodes contains at most $n-1$ edges, which adding up to $d$.
Ignoring the edge weight less than $d/n$ will at most include an extra factor of 2 for the overall distance approximation (or $1+\epsilon'$ for an arbitrarily small constant $\epsilon'>0$ if the cutoff is $d\epsilon'/n$).  Thus the new vertices relaxed by these zero-weight edges are added back to the current bucket, instead of the buckets in the lower levels.  For \mf{Extract-Min}, once we have visited the bucket in the $b$-th level, the bucket in level $b'<b-\lfloor\log_2 n\rfloor$ will never be visited, so the amortized cost for each query is constant.
For \mf{Decrease-Key}, similarly if the current distance is $d$, the buckets below level $\lfloor\log_2 {d/n}\rfloor$ will never have vertices to be added, and buckets above level $\lceil \log_2 {d+1}\rceil$ are always the left children of their parents.  The number of the levels that actually require table lookup to be preprocessed is $O(\log n)$, so by applying the technique introduced in Section~\ref{sec:algo-prelim}, the cost for one \mf{Decrease-Key} is also linear.




\section{The Dominance Sequence}
\label{sec:seq}

In this section we review and introduce the notion of \seq{}s for each point of a
metric space and describe the algorithm for constructing them on a graph.
The basic idea of \seq{}s was previously introduced in~\cite{cohen1997} and~\cite{Kanat}.
Here we name the structure as the \seq{} since the ``dominance'' property introduced below is crucial and related to FRT construction.
In the next section we show how they can easily be converted into an FRT tree.

\subsection{Definition}\label{sec:seq-defn}



\begin{definition}[Dominance]
Given a premetric $(X,d_X)$ and a permutation $\pi$, for two points $x,y\in X$, $x$ \emph{dominates} $y$ if and only if
$$\pi(x)=\min\{\pi(w)\mid w\in X, d_X(w,y)\leq d_X(x,y)\}.$$
\end{definition}

Namely, $x$ dominates $y$ iff $x$'s priority is greater (position in the permutation is earlier)
than any point that is closer to $y$.

The \seq{} for a point $x\in X$, is the sequence of all points that
dominate $x$ sorted by distance.  More formally:
\begin{definition}[Dominance Sequence\footnote{Also called as ``least-element list'' in~\cite{cohen1997}.  We rename it since in later sections we also consider many other variants of it based on the dominance property.}]
For each $x\in X$ in a premetric $(X,d_X)$, the \emph{\seq}
of a point $x$ with respect to a permutation $\pi:X\rightarrow [n]$
(denoted as $\chi_{\pi}^{(x)}$), is the sequence
$\langle p_i\rangle_{i=1}^k$ such that
$1=\pi(p_1)<\pi(p_2)<\cdots<\pi(p_k)=\pi(x)$, and $p_i$ is in
$\chi_{\pi}^{(x)}$ iff $p_i$ dominates $x$.
\end{definition}

We use $\chi_{\pi}$ to refer to all \seq{}s for a premetric under permutation $\pi$.  It is not hard to bound the size of the \seq:
\begin{lemma}[\cite{cohen2000}]
\label{lemma:sigma}
Given a premetric $(X,d_X)$ and a random permutation $\pi$, for each vertex $x\in X$, with w.h.p.
$$\left|\chi_{\pi}^{(x)}\right|= O(\log n)$$
and hence overall, with w.h.p.
$$\left|\chi_{\pi}\right|=\sum_{x\in X}\left|\chi_{\pi}^{(x)}\right|= O(n\log n)$$
\end{lemma}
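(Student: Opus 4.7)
The plan is to reduce the per-point bound to Lemma~\ref{lemma:logn} (the random-permutation prefix-minima bound) and then use a union bound for the aggregate. Fix a point $x \in X$ and list the points of $X$ in nondecreasing order of distance to $x$, say $w_1, w_2, \ldots, w_n$, with any fixed tie-breaking rule (for instance, by $\pi$ itself). The defining condition for $w_i \in \chi_\pi^{(x)}$, namely that $\pi(w_i) = \min\{\pi(w) : d_X(w,x) \le d_X(w_i,x)\}$, translates under this ordering into the statement that $\pi(w_i)$ is the minimum of $\pi(w_1), \ldots, \pi(w_i)$, i.e.\ $w_i$ is a prefix minimum of the sequence $\pi(w_1), \pi(w_2), \ldots, \pi(w_n)$.

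Next I would argue that because the listing $w_1, \ldots, w_n$ depends only on the metric (and a deterministic tie-break), while $\pi$ is a uniformly random permutation of $[n]$, the induced sequence $\pi(w_1), \pi(w_2), \ldots, \pi(w_n)$ is itself a uniformly random permutation of $[n]$. Applying Lemma~\ref{lemma:logn} then gives $|\chi_\pi^{(x)}| = O(\log n)$ both in expectation and with probability at least $1 - n^{-c}$ for any constant $c > 0$.

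For the aggregate bound, I would take a union bound over the $n$ choices of $x$: with probability at least $1 - n \cdot n^{-(c+1)} = 1 - n^{-c}$, every single dominance sequence has length $O(\log n)$ simultaneously, and therefore $|\chi_\pi| = \sum_x |\chi_\pi^{(x)}| = O(n \log n)$ w.h.p. Increasing the constant $c$ in the per-point high-probability statement absorbs the union bound loss, so the conclusion holds for any desired polynomial concentration.

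The main obstacle is only a presentational one: making sure that ties in distance do not invalidate the reduction. This is handled by tying-breaking according to $\pi$ (or any rule fixed before drawing $\pi$ is fine as long as it does not depend on $\pi$) so that the set $\{w : d_X(w,x) \le d_X(w_i,x)\}$ coincides with $\{w_1, \ldots, w_i\}$ and the "prefix-minimum" rephrasing is exact. Once this is pinned down, the rest is a direct application of Lemma~\ref{lemma:logn} plus a union bound.
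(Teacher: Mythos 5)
Your proof follows the same route as the paper's (Appendix~A): reduce $\lvert\chi_\pi^{(x)}\rvert$ to the number of prefix minima of $\pi$ restricted to the distance-sorted ordering of $X$, invoke Lemma~\ref{lemma:logn}, and then sum (union-bound) over all $x$. One small wrinkle in your tie-breaking remark: breaking ties ``by $\pi$'' and ``by a rule fixed before drawing $\pi$'' are not interchangeable as you suggest --- the former makes $\pi(w_1),\dots,\pi(w_n)$ \emph{not} uniformly random (it is sorted within each tie group, which only \emph{reduces} prefix minima), whereas the latter keeps uniformity but gives $\{w : d_X(w,x)\le d_X(w_i,x)\}\supseteq\{w_1,\dots,w_i\}$, so dominators are merely a \emph{subset} of prefix minima rather than exactly the prefix minima. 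Either choice yields the needed upper bound, but you should pick one and state it consistently rather than offering both as if equivalent.
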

Since the proof is fairly straight-forward, for completeness we also provide it in the full version of this paper.

Now consider a graph metric $(V,d_G)$ defined by an undirected positively weighted graph $G=(V,E)$ with $n$ vertices and $m$ edges, and $d_G(u,v)$ is the shortest distance between $u$ and $v$ on $G$.
The \seq{}s of this graph metric can be constructed using $O(m\log n+n\log^2 n)$ time w.h.p.\,\cite{cohen1997}.
This algorithm is based on Dijkstra's algorithm.

\subsection{Efficient FRT tree construction based on the \seq{}s}
\label{sec:tree-constr}

We now consider the construction of FRT trees based on a pre-computed
\seq{}s of a given metric space $(X,d_X)$.
\hide{
Here we assume a graph metric based on an undirected positively
weighted graph $G=(V,E)$ containing $n=|V|$ vertices and $m=|E|$
edges, where the metric $d_G(u,v)$ is the shortest distance between
vertices $u$ and $v$. }
We assume the weights are normalized so
that $1\leq d_X(x,y)\leq\Delta=2^{\delta}$ for all $x\ne y$, where
$\delta$ is a positive integer.

The FRT algorithm~\cite{FRT} generates a top-down recursive
low-diameter decomposition (LDD) of the metric, which preserves the
distances up to $O(\log n)$ in expectation.
It first chooses a
random $\beta$ between 1 and 2, and generates $1+\log_2 \Delta$ levels of
partitions of the graph with radii
$\{\beta\Delta,\beta\Delta/2,\beta\Delta/4,\cdots\}$.  This procedure
produces a laminar family of clusters, which are connected based on
set-inclusion to generate the FRT tree.  The weight of each tree edge on level $i$
is $\beta\Delta/2^{i}$.

Instead of computing these partitions directly, we adopt the idea of a
point-centric view proposed in~\cite{Kanat}.  We use the intermediate
data structure ``\seq{}s'' as introduced in Section~\ref{sec:seq-defn} to
store the useful information for each point.  Then, an FRT tree can be
retrieved from this sequence with very low cost:

\begin{lemma}\label{lem:convert}
Given $\beta$ and the \seq{}s $\chi_{\pi}$ of a metric space with associated
distances to all elements, an FRT tree can be constructed using $O(n \log n)$ time w.h.p.
\end{lemma}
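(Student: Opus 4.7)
The plan is to derive each leaf's root-to-leaf path in the FRT tree directly from its dominance sequence, observe that the path admits a compressed description of length $O(|\chi_\pi^{(x)}|)$, and then assemble the tree by trie-style merging of these compressed paths.

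First I would show that for each leaf $x$ and level $i$ of the FRT tree, the cluster containing $x$ is uniquely identified by $c_i(x)=\arg\min_{w:\,d_X(w,x)\le r_i}\pi(w)$, where $r_i=\beta\Delta/2^i$, and that $c_i(x)\in\chi_\pi^{(x)}$. The argument is that if $w^*$ is the minimizer, then every point nearer $x$ than $w^*$ also lies within $r_i$ of $x$ and therefore has $\pi$ no smaller than $\pi(w^*)$, which is precisely the dominance condition. Writing $\chi_\pi^{(x)}=\langle p_1,\ldots,p_k\rangle$ with associated distances $d_1>d_2>\cdots>d_k=0$, this gives the closed form $c_i(x)=p_j$ for $j=\min\{\ell:d_\ell\le r_i\}$.

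Next, as $i$ grows from $0$ to $\lceil\log_2\Delta\rceil$ the index $j$ is non-decreasing and takes at most $k=|\chi_\pi^{(x)}|$ values, so the cluster-ID sequence of $x$ is a step function with $k$ runs. The transition levels satisfy $c_i(x)=p_j$ exactly for $r_i\in[d_j,d_{j-1})$, which can be solved directly using the constant-time $\lfloor\log\rfloor$ operations from the preliminaries. Thus the compressed path $\{(i_j,p_j)\}_{j=1}^{k}$ is built in $O(k)$ time per leaf, totaling $O(n\log n)$ w.h.p.\ by Lemma~\ref{lemma:sigma}.

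To assemble the tree, I would observe that two leaves share the same level-$i$ ancestor iff their cluster-ID prefixes agree, so the FRT tree is the trie over these sequences; in its compressed form (contracting single-child chains using the representation of Blelloch et al.~\cite{Kanat}) it has $O(n\log n)$ nodes w.h.p. I would build this trie by walking each leaf from the root along its compressed path, using a hash table at each internal node keyed by the pair (starting level of the next run, new cluster-ID) to look up or create the appropriate child in $O(1)$ expected time. When the incoming leaf's next transition occurs partway through an existing compressed edge, the edge is split at the branching level, an $O(1)$ operation chargeable to the new leaf. Edge weights are then assigned directly from the FRT rule $\beta\Delta/2^i$ per compressed edge.

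The main obstacle I anticipate is the bookkeeping of the trie with compressed edges: ensuring that splits happen at the correct level, that children are matched by both cluster-ID and starting level (so that two leaves which later converge to the same cluster-ID from different ancestors are correctly kept in separate subtrees), and that the invariant ``each compressed tree node corresponds to one equivalence class of leaves'' is preserved throughout. This is conceptually straightforward but requires careful argumentation that the amortized cost per transition remains $O(1)$ and the total size stays $O(n\log n)$ w.h.p.
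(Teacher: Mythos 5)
Your proposal is correct and follows essentially the same route as the paper: your $c_i(x)$ sequence is exactly the paper's \pars{} from~\cite{Kanat}, your compressed path $\{(i_j,p_j)\}$ is their \cps{}, and your hash-table-driven trie assembly with edge-splitting is the paper's patricia-trie insertion. The only wording slip is that a cluster at level $i$ is identified by the full prefix $(c_0(x),\ldots,c_i(x))$ rather than by $c_i(x)$ alone (two different clusters can share a center); your trie construction already treats it that way, so nothing in the argument breaks.
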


The difficulty in this process is that, since the FRT tree
has $O(\log \Delta)$ levels and $\Delta$ can be large
(i.e.\ $\Delta>2^{O(n)}$), an explicit representation of the FRT tree can be very costly.
Instead we generate the compressed version with nodes
of degree two removed and their incident edge weights summed into a new edge.
The algorithm is outlined in Algorithm~\ref{algo:FRT}.

\begin{algorithm}[!tp]
\caption{Efficient FRT tree construction}
\label{algo:FRT}
    Pick a uniformly random permutation $\pi:V\rightarrow [n]$.\\
    Compute the \seq{}s $\chi_{\pi}$.\\
    Pick $\beta\in[1,2]$ with the probability density function $f_B(x)=1/(x\ln 2)$.\\
    Convert the \seq~$\chi_{\pi}$ to the \cps~$\bar{\sigma}_{\pi,\beta}$.\\
    Generate the FRT tree based on $\bar{\sigma}_{\pi,\beta}$.\\
\end{algorithm}

\begin{figure}[t]
\vspace{-1em}
\centering
  \includegraphics[width=\columnwidth]{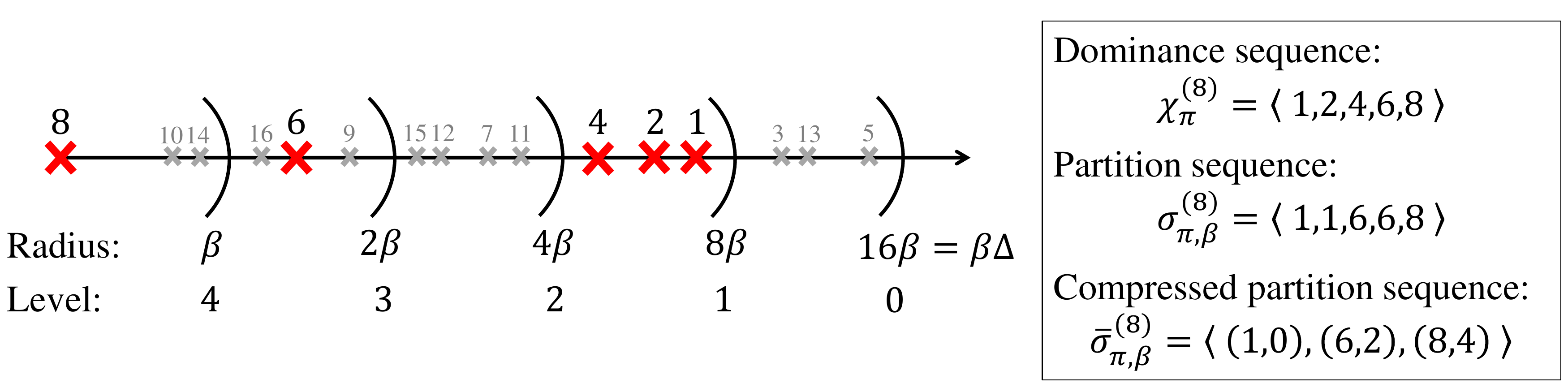}
\caption{An illustration for dominance sequence, partition sequence and compressed partition sequence for vertex 8. Here we assume that the label of each vertex corresponds to its priority. The left part shows the distances of all vertices to vertex 8 in log-scale, and the red vertices dominate vertex 8.}\label{graphs}
\label{fig:sequences}
\vspace{-1em}
\end{figure}

\begin{proof}
We use the definition of \pars{} and \cps{} from~\cite{Kanat}.  Given a
permutation $\pi$ and a parameter $\beta$, the \emph{\pars{}} of a
point $x\in X$, denoted by $\sigma_{\pi,\beta}^{(x)}$, is the
sequence $\displaystyle \sigma_{\pi,\beta}^{(x)}(i) = \min \{
\pi(y)\mid y \in X, d(x,y) \leq \beta \cdot 2^{\delta-i}\}$ for $i =
0, \ldots, \delta$, i.e.\ point $y$ has the highest priority among
vertices up to level $i$.  We note that a trie (radix tree) built on the \pars{} is
the FRT tree, but as mentioned we cannot build this explicitly. The
\cps, denoted as $\bar{\sigma}_{\pi,\beta}^{(x)}$, replaces
consecutive equal points in the \pars{} $\sigma_{\pi,\beta}^{(x)}$ by the
pair $(p_i,l_i)$ where $p_i$ is the vertex and $l_i$ is the highest
level $p_i$ dominates $x$ in the FRT tree.  Figure~\ref{fig:sequences} gives an example of
a \pars{}, a \cps{}, and their relationship to the \seq{}.

To convert the \seq{}s $\chi_{\pi}$ to the \cps{s}
$\bar{\sigma}_{\pi,\beta}$ note that for each point $x$ the points
in $\bar{\sigma}_{\pi,\beta}^{(x)}$ are a subsequence of
$\chi_{\pi}^{(x)}$.  Therefore, for $\bar{\sigma}_{\pi,\beta}^{(x)}$,
we only keep the highest priority vertex in each level from
$\chi_{\pi}^{(x)}$ and tag it with the appropriate level.  Since there
are only $O(\log n)$ vertices in $\chi_{\pi}^{(x)}$ w.h.p., the time
to generate $\bar{\sigma}_{\pi,\beta}^{(x)}$ is $O(\log n)$ w.h.p., and hence
the overall construction time is $O(n\log n)$ w.h.p.

The compressed FRT tree can be easily generated from the \cps{s}
$\bar{\sigma}_{\pi,\beta}$.  Blelloch et. al.~\cite{Kanat} describe a
parallel algorithm that runs in $O(n^2)$ time (sufficient for their
purposes) and polylogarithmic depth.  Here we describe a similar version
to generate the FRT tree sequentially in $O(n \log n)$ time
w.h.p.  The idea is to maintain the FRT as a patricia
trie~\cite{Morrison68} (compressed trie) and insert the \cps{s} one at
a time.  Each insertion just needs to follow the path down the tree
until it diverges, and then either split an edge and create a new node,
or create a new child for an existing node.
Note that a hash table is required to trace the tree nodes since the trie has a non-constant alphabet.
Each insertion takes time
at most the sum of the depth of the tree and the length of the
sequence, giving the stated bounds.
\end{proof}

We note that for the same permutation $\pi$
and radius parameter $\beta$, it generates exactly the same tree as the
original algorithm in~\cite{FRT}.

\subsection{Expected Stretch Bound}\label{sec:expected-stretch}

In Section~\ref{sec:tree-constr} we discussed the algorithm to convert the \seq{s} to a FRT tree.
When the \seq{s} is generated from a graph metric $(G,d_G)$, the expected stretch is $O(\log n)$, which is optimal, and the proof is given in many previous papers~\cite{FRT,Kanat}.
Here we show that any distance function $\hat{d}_G$ in Lemma~\ref{lem:approx-dis} is sufficient to preserve this expected stretch.
As a result, we can use the approximate shortest-paths computed in Section~\ref{sec:sssp} to generate the \seq{s} and further convert to optimal tree embeddings.

\begin{lemma}\label{lem:approx-dis}
Given a graph metric $(G,d_G)$ and a distance function $\hat{d}_G(u,v)$ such that for $u,v,w\in V$, $\lvert \hat{d}_G(u,v)-\hat{d}_G(u,w)\rvert\le 1/\alpha\cdot d_G(v,w)$ and $d_G(u,v)\le\hat{d}_G(u,v)\le 1/\alpha\cdot d_G(u,v)$ for some constant $0<\alpha\le 1$, then the \seq{s} based on $(G,\hat{d}_G)$ can still yield optimal tree embeddings.
\end{lemma}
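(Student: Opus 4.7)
The plan is to mirror the standard FRT expected-stretch proof (as given in~\cite{FRT} or~\cite{Kanat}), but with every use of the true metric $d_G$ replaced by the approximate function $\hat{d}_G$, and to verify that each metric-property the proof relies on survives—up to a constant factor depending on $\alpha$—under the two hypotheses on $\hat{d}_G$. Recall that the standard proof analyzes, for each pair $u,v$, the random ``cutting'' level at which $u$ and $v$ first fall into different clusters. One sorts the candidate cutter vertices $w_1,w_2,\dots$ by their distance to the pair, and for each $w_k$ bounds the probability that $w_k$ both wins the permutation race among $w_1,\dots,w_k$ and that the random radius $\beta\cdot 2^{\delta-i}$ at level $i$ falls in the window $[\hat{d}_G(w_k,u),\hat{d}_G(w_k,v)]$ (assuming WLOG that $\hat{d}_G(w_k,u)\le\hat{d}_G(w_k,v)$); the probability contribution is then charged to the length of this window.

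First I would redo the cut-probability computation using $\hat{d}_G$. The width of the cutting window is $\hat{d}_G(w_k,v)-\hat{d}_G(w_k,u)$, and the Lipschitz-type hypothesis $|\hat{d}_G(w,v)-\hat{d}_G(w,u)|\le (1/\alpha)\,d_G(u,v)$ is exactly the substitute for the triangle inequality that the original proof would apply. This immediately gives a window of length at most $(1/\alpha)\,d_G(u,v)$, so each cut probability is inflated by at most the constant $1/\alpha$ relative to the metric case. Summing the telescoping contribution over levels and then over the $O(\log n)$ effective candidate vertices $w_k$ (using Lemma~\ref{lemma:logn} on the permutation race, unchanged since it is purely combinatorial) yields an expected separation cost of $O((1/\alpha)\log n)\cdot d_G(u,v)=O(\log n)\cdot d_G(u,v)$.

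Second I would verify the dominating property $d_G(u,v)\le d_T(u,v)$. If $u$ and $v$ share a cluster at level $i$, then some center $w$ has $\hat{d}_G(u,w),\hat{d}_G(v,w)\le\beta\cdot 2^{\delta-i}$; combining the Lipschitz bound $\hat{d}_G(u,v)\le \hat{d}_G(u,w)+(1/\alpha)\,d_G(v,w)$ with $d_G\le\hat{d}_G$ yields $d_G(u,v)\le(1+1/\alpha)\beta\cdot 2^{\delta-i}$. Thus each cluster at level $i$ has $d_G$-diameter bounded by a constant times the tree radius at that level, so rescaling the tree edge weights by the constant factor $2(1+1/\alpha)$ restores strict dominance while only inflating the expected stretch by a further constant.

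The main obstacle is that $\hat{d}_G$ need not be symmetric or satisfy the triangle inequality, so the familiar one-line manipulations in the FRT proof must each be justified from the two explicit hypotheses; in particular, one has to be careful to keep the two arguments of $\hat{d}_G$ in the order in which the Lipschitz hypothesis is stated (center first, then the moving point) throughout the cut analysis. Once this bookkeeping is done, every quantitative estimate in the original proof recurs with an additional multiplicative constant depending only on $\alpha$, which is absorbed into the $O(\log n)$ bound and gives the lemma.
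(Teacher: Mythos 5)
Your proof is correct and follows essentially the same route as the paper's argument: both defer to the standard FRT expected-stretch analysis and observe that the only metric-dependent step---bounding the cut window width---survives once the Lipschitz hypothesis $|\hat{d}_G(w,u)-\hat{d}_G(w,v)| \le (1/\alpha)\,d_G(u,v)$ is substituted for the triangle inequality.

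The one place you work harder than necessary is the dominating property. You invoke the Lipschitz hypothesis to get $d_G(u,v)\le(1+1/\alpha)\beta\,2^{\delta-i}$, which for $\alpha<1$ overshoots the available tree radius and forces you to rescale all tree edges by a constant. The paper instead observes that domination is free: since $d_G$ is itself a genuine metric and $\hat d_G$ overestimates it, for $u,v$ sharing a level-$i$ cluster centered at $w$ one has
\[
d_G(u,v)\;\le\; d_G(u,w)+d_G(w,v)\;\le\;\hat d_G(u,w)+\hat d_G(w,v)\;\le\;2\beta\,2^{\delta-i}\;\le\;d_T(u,v),
\]
using only the $d_G$-triangle inequality and $d_G\le\hat d_G$, with no dependence on $\alpha$ and no rescaling. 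This is why the paper's proof sketch dispatches domination in one sentence (``the overestimate distances hold the dominating property''). Your route still yields the lemma---multiplying every tree edge by $2(1+1/\alpha)$ restores domination and only inflates the stretch by a constant absorbed into $O(\log n)$---but the cleaner argument above avoids the detour and gives tighter constants.
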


\begin{proof}[Proof outline]
Since the overestimate distances hold the dominating property of the tree embeddings, we show the expected stretch is also not affected.
We now show the expected stretch is also held.

Recall the proof of the expected stretch by Blelloch et al.\ in~\cite{Kanat} (Lemma 3.4).
By replacing $d_G$ by $\hat{d}_G$, the rest of the proof remains unchanged except for Claim 3.5, which upper bounds the expected cost of a common ancestor $w$ of $u,v\in V$ in $u$ and $v$'s \seq{s}.
The original claim indicates that the probability that $u$ and $v$ diverges in a certain level centered at vertex $w$ is $O(\lvert d_G(w,u)-d_G(w,v)\rvert/d_G(u,w))=O(d_G(u,v)/d_G(u,w))$ and the penalty is $O(d_G(u,w))$, and therefore the contribution of the expected stretch caused by $w$ is the product of the two, which is $O(d_G(u,v))$ (since there are at most $O(\log n)$ of such $w$ (Lemma~\ref{lemma:sigma}), the expected stretch is thus $O(\log n)$).
With the distance function $\hat{d}_G$ and $\alpha$ as a constant, the probability now becomes $O(\lvert \hat{d}_G(w,u)-\hat{d}_G(w,v)\rvert/\hat{d}_G(u,w))=O(d_G(u,v)/d_G(u,w))$, and the penalty is $O(\hat{d}_G(u,w))=O(d_G(u,w))$.
As a result, the expected stretch asymptotically remains unchanged.
\end{proof}

\subsection{Efficient construction of approximate \seq{}s}\label{sec:eff-constr-app-seq}

Assume that $\hat d_G(u,v)$ is computed as $d_u(v)$ by the shortest-path algorithm in Section~\ref{sec:sssp} from the source node $u$.
Notice that $d_u(v)$ does not necessarily to be the same as $d_v(u)$, so $(G,\hat d_G(u,v))$ is not a metric space.
Since the computed distances are distance preserving, it is easy to check that Lemma~\ref{lem:approx-dis} is satisfied, which indicate that we can generate optimal tree embeddings based on the distances.
This leads to the main theorem of this section.

\begin{theorem} \emph{\textbf{(Efficient optimal tree embeddings)}}
There is a randomized algorithm that takes an undirected positively weighted graph $G=(V,E)$ containing $n=|V|$ vertices and $m=|E|$ edges, and produces an tree embedding such that for all $u,v\in V$, $d_G(u,v)\leq d_T(u,v)$ and $\E[d_T(u,v)]\leq O(\log n)\cdot d_G(u,v)$. The algorithm w.h.p.\,runs in $O(m\log n)$ time.
\end{theorem}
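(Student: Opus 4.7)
The plan is to combine the three subroutines foreshadowed throughout the paper: (i) pick a uniformly random permutation $\pi:V\to[n]$; (ii) build approximate \seq{}s $\chi_\pi$ using repeated calls to the shortest-path algorithm from Theorem~\ref{thm:sp}; and (iii) convert $\chi_\pi$ into an FRT tree via Lemma~\ref{lem:convert} in $O(n\log n)$ time. The final tree's stretch and dominating property are then inherited through Lemma~\ref{lem:approx-dis}, so only the construction of $\chi_\pi$ has to be argued in detail.

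The main effort is in step (ii). I would process the sources in the order given by $\pi$: for the $i$-th source $u_i=\pi^{-1}(i)$, launch the approximate SSSP from $u_i$ and, by the ordering property of Theorem~\ref{thm:sp}, enumerate vertices in order of increasing approximate distance $d_{u_i}(\cdot)$. Each time a vertex $v$ is popped, append $u_i$ to $\chi_\pi^{(v)}$ only if $d_{u_i}(v)$ is smaller than the minimum distance from any previously recorded dominator of $v$; otherwise prune the exploration through $v$. Pruning is safe because, by distance preservation, every vertex $w$ reached later from $u_i$ through $v$ has $d_{u_i}(w)\ge d_{u_i}(v)$, so $u_i$ cannot dominate $w$ once a higher-priority vertex already beats it at $v$. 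By the linear-time guarantee of Theorem~\ref{thm:sp}, the work spent on source $u_i$ is proportional to the sum of degrees of the vertices it actually dominates, so the total work across all sources is
$$O\!\left(\sum_{v\in V}|\chi_\pi^{(v)}|\cdot\deg(v)\right)=O(m\log n)$$
with high probability by Lemma~\ref{lemma:sigma}.

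For correctness, I would invoke Lemma~\ref{lem:approx-dis}: the distances produced by Theorem~\ref{thm:sp} are $(1/4)$-\disPreserve{}, so after rescaling by $4$ we obtain $\hat d_G$ with $d_G(u,v)\le\hat d_G(u,v)\le 4\cdot d_G(u,v)$ and the Lipschitz-type bound $|\hat d_G(u,v)-\hat d_G(u,w)|\le 4\cdot d_G(v,w)$. Lemma~\ref{lem:approx-dis} therefore guarantees that the \seq{}s built from $\hat d_G$ yield an FRT tree whose expected stretch is still $O(\log n)$, while the dominating property $d_G(u,v)\le d_T(u,v)$ follows because we are embedding an over-estimated distance function. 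Combining the $O(m\log n)$ cost of step (ii) with the $O(n\log n)$ cost of Lemma~\ref{lem:convert} yields the claimed $O(m\log n)$ total running time w.h.p.

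The hard part will be handling the edge-weight range: Theorem~\ref{thm:sp} gives linear-time SSSP only when the weights live in a relative window of $n^{O(1)}$, but the diameter $\Delta$ can be arbitrarily large and a single SSSP call from $u_i$ may need to span many orders of magnitude. Following the strategy outlined in the introduction, I would compute approximate \seq{}s by performing SSSP calls restricted to edges within successive relative windows of size $n^{O(1)}$ and then concatenate the resulting prefixes of $\chi_\pi^{(v)}$; since dominance within each window is determined only by the rank of priorities among vertices at comparable distances, the concatenated sequence still satisfies the hypothesis of Lemma~\ref{lem:approx-dis}. The remaining technical point is to amortize the cost of these multiple SSSP calls against the degree sum $\sum_v |\chi_\pi^{(v)}|\cdot\deg(v)$, using the fact that a vertex $v$ only contributes to windows where one of its dominators lies, which again yields $O(m\log n)$ w.h.p.\ via Lemma~\ref{lemma:sigma}.
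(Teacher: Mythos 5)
Your high-level plan --- compose the approximate SSSP of Theorem~\ref{thm:sp}, the construction of approximate \seq{}s, and the conversion of Lemma~\ref{lem:convert}, with correctness routed through Lemma~\ref{lem:approx-dis} after rescaling the $(1/4)$-\disPreserve{} distances by~$4$ --- matches the paper's outline. But the way you propose to actually build $\hat\chi_\pi$ is genuinely different from what the paper does, and it has a real gap exactly where you flag ``the hard part.'' You push the Cohen-style idea from Algorithm~\ref{algo:seq}: iterate over sources in priority order, run a pruned SSSP from each, and charge the work of source $u_i$ to the degrees of the vertices it dominates. The paper's Algorithm~\ref{algo:linear} instead first sorts all edges, builds a Kruskal-like union-find of vertex components contracted by small-weight edges, and processes $O(\log_n\Delta)$ subproblems of relative edge range $n^{O(1)}$ in decreasing order of scale; inside each subproblem it runs the approximate SSSP only from the highest-priority vertex of each contracted component. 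The contraction is the load-bearing element: each edge appears in at most two subproblems, and by Lemma~\ref{lemma:logn} each component is dominated by $O(\log n)$ other components per subproblem, which is what makes the total Dijkstra operations sum to $O(m\log n)$.

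Your ``windowed SSSP calls'' plus ``$v$ only contributes to windows where one of its dominators lies'' is not a substitute for this. Without contracting the already-resolved part of the graph into a single representative, the same vertex can be re-enumerated as a source and touched as a target in every window above its own scale, and the sum $\sum_v|\chi_\pi^{(v)}|\deg(v)$ no longer accounts for that re-examination; the amortization you state is not argued. There is also a second, more subtle issue: the correctness of Cohen's pruning rests on the path-monotonicity claim used in Lemma~\ref{lemma:correctness-dom-seq} (if $u$ dominates $v$ then $u$ dominates every vertex on the shortest $u$-$v$ path), proved via a chain of triangle inequalities on the exact metric. With the approximate, non-symmetric $\hat d_G$ produced by the bucket-tree SSSP this chain does not come for free, and your ``pruning is safe because distance preservation ...'' sentence asserts rather than proves it. The paper sidesteps both difficulties at once: the union-find/component decomposition bounds the per-subproblem work, and the per-subproblem dominance over components removes the need for a global path-monotonicity argument on $\hat d_G$. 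To repair your version you would need to prove an approximate analogue of path monotonicity and a window-amortization lemma; otherwise you should adopt the scale decomposition with component contraction.
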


The algorithm computes approximate \seq{s} $\hat\chi_{\pi}$ by the approximate SSSP algorithm introduced in Section~\ref{sec:sssp}.
Then we apply Lemma~\ref{lem:convert} to convert $\hat\chi_{\pi}$ to an tree embedding.
Notice that this tree embedding is an FRT-embedding based on $\hat d_G$.
We still call this an FRT-embedding since the overall framework to generate hierarchical tree structure is similar to that in the original paper~\cite{FRT}.

The advantage of our new SSSP algorithm is that the \mf{Decrease-Key} and \mf{Extract-Min} operation only takes a constant time when the relative edge range (maximum divided by minimum) is no more than $n^{O(1)}$.
Here we adopt the similar idea from~\cite{klein1997} to solve the subproblems on specific edge ranges, and concatenate the results into the final output.
We hence require a pre-process to restrict the edge range.
The pseudocode is provided in Algorithm~\ref{algo:linear}, and the details are explained as follows.

\myparagraph{Pre-processing \normalfont{(line 1--\ref{line:endofprepro} in Algorithm \ref{algo:linear})}}  The goal is to use $O(m\log n)$ time to generate a list of subproblems: the $i$-th subproblem has edge range in the interval $[n^{i-1}, n^{i+1}]$ and we compute the elements in the \seq{}s with values falling into the range from $n^i$ to $n^{i+1}$.
All the edge weights less than the minimum value of this range are treated as 0.
Namely, the vertices form some components in one subproblem and the vertex distances within each component is 0.
We call this component in a specific subproblem the \textbf{vertex component}.
Next we will show: first, how to generate the subproblems; and second, why the computed approximate distances can generate a constant approximation of the original graph metric.

To generate the subproblems, we first sort all edges by weights increasingly, and then use a scan process to partition edges into different ranges.
In the pseudocode we first compute the integer part of the logarithm of each edge to based $n$, and then scan the edge list to find the boundaries between different logarithms.
Then we have the marks of the beginning edges and ending edges in each subproblem.
To compute the logarithms, we use another scan process that computes $n^i$ accordingly and compare the weight of each edge to this value.
Therefore only two types of operations, comparison and multiplication, are used, which are supported in most of the computational models.
This implementation requires edge weights to be no more than $n^{O(m)}$, but if not, we can just add an extra condition in the scan process: if the difference between two consecutive edges in the sorted list is at least $n^2$ times, then instead of aligning the boundary of subproblems using powers of $n$, we leave an empty subproblem between these two edges, and and set the left boundary of the next subproblem to be the larger edge weight.
Since we do not actually use the exact values of the logarithms other than determining the boundaries between subproblems, this change will not affect following computations but bound the amount of subproblems.
This step takes $O(m\log n)$ work.

Since in the subproblems zero-weight edges form vertex components,
we need to keep a data structure to maintain the vertex components in all levels such that: given any specific subproblem,
(1) the vertex component containing any vertex $v$ can be found in $O(\log n)$ time; and
(2) the all vertices in a component can be acquired with the cost linear to the number of vertices in this component.
This data structure is implemented by union-find (tree-based with no path compressions) and requires an extra scanning process as preprocessing.
This preprocessing is similar to run Kruskal's algorithm~\cite{kruskal1956} to compute minimum spanning trees.
We construct this data structure using a scan on the edges in an increasing order of edge weights.
For each edge, if it merges two components, we merge the two trees in union-find by rank, add two directed edges between the roots, and update the root's priority if the component with smaller rank has a higher priority.
Meanwhile the newly added edges also store the loop variable $i$ (line~\ref{line:outer-prepro}).
The vertices in a component in the $i$-th subproblem (line~\ref{line:innterloop}) can be found by traversing from any vertex in the final union-find tree, but ignoring the edges with the stored loop variables larger than $i$.
Finally, given a vertex and a subproblem, the components can be queried by traversing to the root in the union-find tree until reaching the vertex root of this vertex component, and the cost is  $O(\log n)$ per query.
The cost of this step is also $O(m\log n)$.

\myparagraph{Computing \aseq{}s {\normalfont (line~\ref{line:startofmain} -- line~\ref{line:endofmain})}}  After preprocessing, we run the shortest-path algorithm on each subproblem, and the $i$-th subproblem generates the entries in the \aseq{}s in the range of $[n^i, n^{i+1})$.  The search is restricted to edges with weights less than $n^{i+1}$.
Meanwhile the lengths of edges less than $n^{i-1}$ are treated as zero, and the vertices with zero distances can be queried with the union-find tree computed in the preprocessing.
Finally, we solve an extra subproblem that only contains edges with weight less than $n$ to generate the elements in \seq{}s in $[1,n)$.

Observe that in the $i$-th subproblem, if vertex $v$ dominates vertex $u$ with distance at least $n^i$, then (1) $v$ dominates all vertices in the vertex component containing $u$; (2) $v$ has the highest priority in $v$'s vertex component.
We thus can apply the approximate shortest-path described in Section~\ref{sec:algo-prelim} with simple modifications to compute the \seq{}s.
For each vertex component, we only run the shortest-path from the highest priority vertex, whose label is stored in the root node in the union-find tree.
When it dominates another vertex component, then we try to append this vertex to the \seq{} of each vertex in the component.
This process is shown as the function \mf{ComputeDistance} in Algorithm~\ref{algo:linear}.

\myparagraph{Correctness and tree properties}
Lemma~\ref{lem:approx-dis} indicates that optimal tree embeddings can be constructed based on $\hat d_G(u,v)$ as long as $\hat d_G(u,v)$ is computed as $d_u(v)$ by the any single-source shortest-path algorithm that is $\alpha$-\disPreserve{} for any constant $\alpha$.
We now show that the output of Algorithm~\ref{algo:linear}, \aseq{}s $\hat\chi_{\pi}$, is such $\hat d_G(u,v)$ with $\alpha=1/8$.

Theorem~\ref{thm:sp} shows that the computed distances of SSSP algorithm is $1/4$-\disPreserve{}.
Furthermore, the partitioning of the computation into multiple subproblems also leads to an approximation.
The worst approximation lies in the case of a path containing $n-2$ edges with length $n^i-\epsilon$ (for an arbitrary small $\epsilon>0$) and one edge with length $n^{i+1}$.
In this case the approximated distance is $n^{i+1}$, and the actual distance is no more than $(2n-2)n^i$, so the gap is at most a factor of 2.
Clearly this approximation satisfies $\lvert \hat{d}_G(u,v)-\hat{d}_G(u,w)\rvert\le 1/\alpha\cdot d_G(v,w)$.
Combining these two together gives the stated results.

We also need to show that for a pair of vertices $u,v\in V$ and the permutation $\pi$, if $u$ dominates $v$ in $\hat d_G$, the algorithm will add $u$ to ${\hat\chi_{\pi}}^{(v)}$.
Assume $\hat d_G(u,v)\in[n^i,n^{i+1})$, then $u$ dominates $v$ in $\hat d_G$ iff $u$ has the highest priority for the vertex component in the $i$-th subproblem and dominates the vertex component containing $v$.  Then Algorithm~\ref{algo:linear} will correctly add $u$ in ${\hat\chi_{\pi}}^{(v)}$.

\hide{
\begin{lemma}
By replacing the \seq{}s to be the approximate version $\chi'_{\pi}$ computed by Algorithm~\ref{algo:linear}, the algorithm introduced in Section~\ref{sec:tree-constr} can still yield valid FRT trees, i.e.\ for each pair of vertices $u,v\in V$, we have $d(u,v)\le d_T(u,v)$ and $\E[d_T(u,v)]\le O(\log n)\cdot d(u,v)$.
\end{lemma}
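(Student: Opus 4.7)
The plan is to reduce this lemma to two earlier results already in hand: Lemma~\ref{lem:approx-dis}, which shows that \seq{}s built from any distance function $\hat d_G$ satisfying a sandwich and a triangle-like condition yield an optimal FRT embedding of $(G,d_G)$, together with the correctness argument of Algorithm~\ref{algo:linear}, which produces \seq{}s with respect to a particular $\hat d_G$ derived from the approximate SSSP routine of Section~\ref{sec:sssp}.

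First I would make the identification explicit: define $\hat d_G(u,v) := d_u(v)$, where $d_u(\cdot)$ is the value produced by the subproblem-based shortest-path computation invoked from source $u$ in Algorithm~\ref{algo:linear}. Then $\chi'_\pi$ is, by construction, exactly the family of \seq{}s of $(V,\hat d_G)$. The core verification here is the claim already sketched in the ``Correctness and tree properties'' paragraph: a vertex $u$ is appended to ${\chi'_\pi}^{(v)}$ precisely when $u$ dominates $v$ under $\hat d_G$. This relies on the observation that if $\hat d_G(u,v) \in [n^i, n^{i+1})$ then within the $i$-th subproblem $u$ must be the highest-priority element of its vertex component, so the algorithm does launch an SSSP from $u$, and every vertex in $v$'s component sits at the same $\hat d_G$ distance from $u$, so appending $u$ to each of their approximate sequences is sound. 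The union-find structure built during preprocessing supports these component queries in the required time.

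Next I would verify that $\hat d_G$ satisfies the two hypotheses of Lemma~\ref{lem:approx-dis} with a constant $\alpha$. Theorem~\ref{thm:sp} guarantees $\tfrac{1}{4}$-distance preservation on each individual subproblem, and the scale-based partitioning itself contributes at most another factor of two, as witnessed by the worst-case path of $n-2$ edges of weight just below $n^i$ and one edge of weight $n^{i+1}$. Multiplying gives $\alpha = 1/8$. The sandwich $d_G(u,v) \le \hat d_G(u,v) \le 8\,d_G(u,v)$ is then immediate, and the Lipschitz-type bound $|\hat d_G(u,v) - \hat d_G(u,w)| \le 8\, d_G(v,w)$ is inherited from the distance-preserving property applied at the common source $u$, noting that the asymmetry $d_u(v)\ne d_v(u)$ causes no issue since Lemma~\ref{lem:approx-dis}'s conditions are stated only in terms of distances from a fixed source. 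Invoking that lemma with $\alpha=1/8$ yields both the dominating property $d_G(u,v)\le d_T(u,v)$ and the expected stretch $\E[d_T(u,v)]\le O(\log n)\cdot d_G(u,v)$.

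The main obstacle I expect is the first step, specifically ensuring that the subproblem decomposition does not silently drop any dominator. Concretely, I must rule out the scenario in which a vertex $u$ dominates $v$ under $\hat d_G$ but the SSSP in the relevant subproblem fails to reach $v$'s component from $u$, or in which $u$ appears to dominate a component only because of a closer, higher-priority vertex that has been absorbed into a zero-weight component at the wrong scale. Tightening this step requires an induction over the subproblem index $i$ that uses the invariant that each union-find root's stored priority equals the minimum priority reachable via edges of weight below $n^{i-1}$. Once this bookkeeping is in place, the remaining ingredients are direct invocations of Theorem~\ref{thm:sp} and Lemma~\ref{lem:approx-dis}.
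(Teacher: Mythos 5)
Your proposal is correct and follows essentially the same route as the paper: reduce to Lemma~\ref{lem:approx-dis}, show $\hat d_G(u,v)=d_u(v)$ is a $1/8$-distance-preserving function (factor $1/4$ from Theorem~\ref{thm:sp}, factor $1/2$ from rounding away sub-$n^{i-1}$ edges in the subproblem decomposition), and check that $u$ is appended to ${\chi'_\pi}^{(v)}$ iff $u$ dominates $v$ under $\hat d_G$. The obstacle you flag at the end (ruling out dropped dominators via an invariant on the union-find root priorities) is a reasonable tightening of a step the paper handles only informally, but it does not change the argument.
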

\begin{proof}
This proof consists two steps.

In Section~\ref{sec:tree-constr} we show that given the \seq{}s of a graph metric $d_G$, the corresponding FRT trees can be constructed.
Consider another graph metric $d_G'$ that $d_G'(u,v)$ for each pair of vertices $u$ and $v$ is a constant approximation of $d_G(u,v)$, i.e.\ $d_G(u,v)\le d_G'(u,v)<c\cdot d_G(u,v)$ for some constant $c$.
Then if we apply Algorithm~\ref{algo:FRT} on $d_G'$, then we have $d_G'(u,v)\le d_T(u,v)$ and $\E[d_T(u,v)]\le O(\log n)\cdot d_G'(u,v)$,
which further leads to $d_G(u,v)\le d_T(u,v)$ and $\E[d_T(u,v)]\le O(\log n)\cdot d_G(u,v)$.

Then we show that for a graph metric $d_G$, Algorithm~\ref{algo:linear} computes the \seq{}s for a graph metric $d_G'$ which is a constant approximation of $d_G$.
The approximation occurs in two locations.
First, the shortest-path algorithm is approximate.  The result is a 4-approximate distance obeying the triangle inequality.
Second, the partitioning of the computation into multiple subproblems also leads to an approximation.
The worst approximation lies in the case of a path containing $n-2$ edges with length $n^i-\epsilon$ (for an arbitrary small $\epsilon>0$) and one edge with length $n^{i+1}$.
In this case the approximated distance is $n^{i+1}$, and the actual distance is no more than $(2n-2)n^i$, so the gap is at most a factor of 2.
Clearly this approximation obeys triangle inequality as well.
Combining these two locations together, the computed distance for $u,v\in V$, denoted previously as $d_G'(u,v)$, satisfies $d_G(u,v)\le 8d_G'(u,v)\le 8d_G(u,v)$ .

Lastly, we prove that for a pair of vertices $u,v\in V$ and the permutation $\pi$, if $u$ dominates $v$ in $d_G'$, the algorithm will add $u$ to ${\chi'_{\pi}}^{(v)}$.
Assume $d_G'(u,v)\in[n^i,n^{i+1})$, then $u$ dominates $v$ in $d_G'$ iff $u$ has the highest priority for the vertex component in the $i$-th subproblem and dominates the vertex component containing $v$.  Then Algorithm~\ref{algo:linear} will correctly add $u$ in ${\chi'_{\pi}}^{(v)}$.

Altogether, Algorithm~\ref{algo:linear} computes the \seq{}s for a graph metric which is a constant approximation to the original graph metric, and this approximation is independent with $\pi$ and $\beta$.
Then applying the tree construction described in Section~\ref{sec:tree-constr} yields valid FRT trees with the two properties satisfied.
\end{proof}
}

\myparagraph{Time complexity}
Finally we show that Algorithm~\ref{algo:linear} has time complexity $O(m\log n)$  w.h.p.
This indicates that the whole construction process costs $O(m\log n)$,
since the postprocessing to construction of an FRT tree based on \seq{}s has cost $O(n\log n)$ w.h.p.
We have shown that the pre-processing has time complexity $O(m\log n)$.
The remaining parts include the search for approximate shortest paths, and the construction of the \aseq{}s.

We first analyze the cost of shortest-path searches.  For each subproblem, Lemma~\ref{lemma:logn} shows that w.h.p.\ at most $O(\log n)$ vertex components dominate a specific vertex component.  Hence the overall operation number of \mf{Decrease-Key} and \mf{Extract-Min} is $O(m'\log n)$ where $m'$ is the number of edges in this subproblem.  Since each edge is related to at most two subproblems, the overall cost of shortest-path search is $O(m\log n)$ w.h.p.

Next we provide the analysis of the construction of \seq{}s.
We have shown that Algorithm~\ref{algo:linear} actually computes the \seq{}s of an approximate graph metric, so the overall number of elements in $\hat\chi_{\pi}$ is $O(n\log n)$ w.h.p.
Since the vertex components can be retrieved using the union-find structure, the cost to find each vertex is $O(1)$ asymptotically.
Note that in the $i$-th subproblem, if $\hat d_G(u,v)\ge n^i$, $u$ dominates $v$ if and only if $u$ dominates all vertices in $v$'s component.
Therefore $u$ is added to all ${\hat\chi_{\pi}}^{(v')}$ for $v'$ in this component.
Since the subproblems are processed in decreasing order on distances (line~\ref{line:startofmain}), there will be no multiple insertions for a vertex in an \aseq{}, nor further removal for an inserted element in the sequence.
Lemma~\ref{lemma:sigma} bounds the total number of vertices in all components $S$ by $O(n \log n)$ w.h.p.\ and each insertion takes $O(1)$ time, so the overall cost to maintain the \aseq{s} is $O(n \log n)$ w.h.p.
In total, the overall time complexity to construct the \aseq{}s is $O(m \log n)$ w.h.p.

\begin{algorithm}[!th]
\caption{Construction of the \aseq{}s}
\label{algo:linear}
\KwIn{A weighted graph $G=(V,E)$ and a random permutation $\pi$.}
\KwOut{Approximate \seq{}s $\chi'_{\pi}$ and the set of associated distances $d$.}
    \vspace{0.5em}
    Initialize each vertex as a singleton component, i.e.\ $f(v)\gets v$\\
    Sort all edges based on their weights so that $d(e_1)\le d(e_2)\le\cdots\le d(e_m)$\\
    $d\leftarrow \varnothing$\\
    Computer the logarithm of each edge: $\pw{}_i\gets \lfloor\log_n{d(e_i)}\rfloor$\\
    Let $\stt{}_i$ be the first edge that $\pw{}_i=i$\\
    \For{$i\gets 0$ to $\lfloor\log_n{d(e_m)}\rfloor$\label{line:outer-prepro}} {
            \For{$j\gets \stt{}_i$ to $\stt{}_{i+2}-1$} {
                Mark the corresponding components for edge $e_j$
            }
            \For{$j\gets \stt{}_i$ to $\stt{}_{i+1}-1$} {
                \If{$e_j$ connects two different components} {
                    Merge the smaller component into the larger one\\
                    The new component has the priority to be the higher among the previous two\\
                }
            }
    }\label{line:endofprepro}
    \For{$i\gets \lfloor\log_n{d(e_m)}\rfloor$ downto 0\label{line:startofmain}} {
        \mf{ComputeDistance}$(\{e_{\script{\stt{}}_i},\cdots,e_{\script{\stt{}}_{i+2}-1}\}, d(e_{\script{\stt{}}_{i+1}}))$\\
    }
    \mf{ComputeDistance}$(\{e_1,\cdots,e_{i-1}\},1)$\label{line:endofmain}\\
\hide{
    $i''\gets 1$\\
    \For{$i\leftarrow 1$ to $m$} {
        \lIf{$d(e_i)>n$} {break}
    }
    $i'\gets i, \en{}_i\gets -1$\\

    \For{$i\gets i'$ to $m+1$} {
        \If{$i=m+1$ or $d(e_i)>n^2\cdot d(e_{i''})$} {
            $\stt{}_i\gets i'', \en{}_i\gets i'$\\
            \For{$j\gets i''$ to $i-1$} {
                Mark the corresponding components for edge $e_j$
            }
            \For{$j\gets i''$ to $i'-1$} {
                \If{$e_j$ connects two different components} {
                    Merge the smaller component into the larger one\\
                    The new component has the priority to be the higher among the previous two\\
                }
            }
            $i''\gets i', i'\gets i$
        }
    }
    $i\gets m+1$\label{line:endofprepro}\\
    \While {$\en{}_i>0$} {
        \mf{ComputeDistance}$(\{e_{\script{\stt{}}_i},\cdots,e_{i-1}\},n\cdot d(e_{\script{\stt{}}_i}))$\\
        $i\gets \en{}_i$
    }
    \mf{ComputeDistance}$(\{e_1,\cdots,e_{i-1}\},1)$\\
}
    \Return{$\hat\chi_{\pi}$, $d$}\\
    \medskip
    \Fn{\mf{ComputeDistance}(edgeset $E'$, range $r$)} {
        Let $C$ be the set of vertex components that contains the endpoints associated to $E'$\\
        Sort the components based on the priority, i.e.\ $\pi(V'_1)<\cdots<\pi(V'_{|C|})$ for all $V'_i\in C$\\
        Build adjacency list by sort the edges $E'$ based on the component endpoints\\
        \lForEach{$V'\in C$} {
        $\delta(V')\leftarrow+\infty$
        }
        \For{$i\leftarrow 1$ to $|C|$}
        {
            Let $p$ be the vertex with the highest priority in $V'_i$\\
            Apply approximate SSSP algorithm to compute $S=\{S'\in C\mid d(p,S')<\delta(S')\}$\\
            \For{$S'\in S$} {
                $\delta(S')\gets d(p,S')$\\
                \If {$\delta(S')>r$} {
                    \For {$u'\in S'$\label{line:innterloop}} {
                        Try to append $p$ to ${\hat\chi_{\pi}}^{(u')}$\\
                        If successful, $d\gets d\cup \{(u',p)\rightarrow 8\cdot\delta(S')\}$\label{line:distance}
                    }
                }
            }
        }
    }
    \Return{$\hat\chi_{\pi}$, $d$}
\end{algorithm}
\section{Asymptotically Tight Ramsey Partitions Based on FRT Trees}
\label{sec:frt-ramsey}

In this section we show that the probability distribution over FRT trees is an asymptotically tight Ramsey Partition. Therefore, FRT trees can be used to construct approximate distance oracle with size $O(n^{1+1/k})$, $O(k)$ stretch and $O(1)$ query time using the algorithm presented in~\cite{Ramsey}. The construction time of this oracle is $O(n^{1/k}(m\log n+n\log^2 n))$, which is not only faster than the best known algorithm~\cite{FastCKR} with $O(n^{1/k}m\log^3 n)$ time complexity for construction, but also much simpler.
More importantly, our new algorithm provides an $18.5k$-approximation, smaller than the bounds $128k$ in~\cite{Ramsey} and $33k$ in~\cite{naor2012}.

Recall the definition of Ramsey partitions: given a metric space $(X,d_X)$, an $(\alpha, \gamma)$ Ramsey partition is the probability distribution over partition trees $\{\mathcal{P}_k\}_{k=0}^\infty$ of $X$, such that:
$$\Pr\left[\forall i\in \mathbb{N},B_X\left(x,\alpha\cdot c^{-i}\Delta\right)\subseteq \mathcal{P}_i(x)\right]\geq|X|^{-\gamma}.$$
\begin{theorem}
\label{thm:FRTRAM}
The probability distribution over FRT trees is an asymptotically tight Ramsey Partition with $\alpha=\Omega(\gamma)$ (shown in the appendix) and fixed $c=2$. More precisely, 
for every $x\in X$,
$$\Pr\left[\forall i\in \mathbb{N},B_X\left(x, \left(1-2^{-1/2a}\right){2^{-i}\Delta}\right)\subseteq \mathcal{P}_i(x)\right]\geq{1\over 2}|X|^{-{2\over a}}$$
for any positive integer $a>1$.
\end{theorem}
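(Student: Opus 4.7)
The plan is to decouple the randomness in $\pi$ and $\beta$ and then exploit an independence property of left-to-right minima of a random permutation. Fix $x\in X$ and sort the vertices by increasing distance from $x$ as $w_1=x,w_2,\ldots,w_n$ with $a_j:=d_X(x,w_j)$. Set $R_i:=\beta\cdot 2^{\delta-i}$ and $r_i:=\alpha\cdot 2^{\delta-i}$ for $\delta=\log_2\Delta$ and $\alpha=1-2^{-1/(2a)}$, and let $\Xi(\pi)=\{j:\pi(w_j)<\pi(w_{j'})\text{ for every }j'<j\}$ be the indices of left-to-right minima. I claim that $B_X(x,r_i)\subseteq\mathcal{P}_i(x)$ whenever no $j\in\Xi$ has $a_j\in(R_i-r_i,R_i+r_i]$: setting $j^-:=\max\{j\in\Xi:a_j\le R_i-r_i\}$ (which exists since $1\in\Xi$), the vertex $w_{j^-}$ covers all of $B_X(x,r_i)$ by the triangle inequality, while any $v$ with $\pi(v)<\pi(w_{j^-})$ must equal some $w_{j'}$ with $j'>j^-$, and the $\pi$-minimum $w_{j''}$ of $\{w_1,\ldots,w_{j'}\}$ is in $\Xi$ with $j''>j^-$; our hypothesis together with the maximality of $j^-$ forces $a_{j''}>R_i+r_i$, hence $a_{j'}\ge a_{j''}>R_i+r_i$ and so $v$ cannot reach any point of $B_X(x,r_i)$. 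Therefore padding at every level simultaneously is guaranteed once $\Xi(\pi)\cap J(\beta)=\emptyset$, where $J(\beta):=\{j\ge 2:\exists i,\,a_j\in(R_i-r_i,R_i+r_i]\}$ depends only on $\beta$.

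A standard fact about a random permutation is that the events $\{j\in\Xi\}$ are mutually independent with $\Pr[j\in\Xi]=1/j$, as one verifies via $\Pr[\bigcap_{j\in S}\{j\in\Xi\}]=\prod_{j\in S}1/j$ by conditioning on the relative $\pi$-ranks from the largest index in $S$ downward. Conditioning on $\beta$ and applying this independence yields
\[
\Pr_\pi\!\bigl[\Xi\cap J(\beta)=\emptyset\bigr]=\prod_{j\in J(\beta)}\!\!\Bigl(1-\tfrac{1}{j}\Bigr)\ge\exp\!\Bigl(-2\sum_{j\in J(\beta)}\tfrac{1}{j}\Bigr),
\]
using $1-1/j\ge e^{-2/j}$, valid for $j\ge 2$. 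Taking expectation over $\beta$ and applying Jensen's inequality to the convex function $e^{-x}$ gives
\[
\Pr\!\bigl[\forall i,\,B_X(x,r_i)\subseteq\mathcal{P}_i(x)\bigr]\ge\exp\!\Bigl(-2\sum_{j\ge 2}\tfrac{\mu_j}{j}\Bigr),\qquad \mu_j:=\Pr_\beta[j\in J(\beta)].
\]

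Finally, I would establish the uniform bound $\mu_j\le 1/a$. For each $j\ge 2$ there is a unique level $i_j$ with $u_j:=a_j/2^{\delta-i_j}\in[1,2)$; integrating $f_B(x)=1/(x\ln 2)$ over the $\beta$-interval $[u_j-\alpha,u_j+\alpha]\cap[1,2]$, plus the (at most one) contribution from level $i_j\pm 1$ that appears only when the primary interval is clipped by the boundary, yields $\mu_j\le\log_2\!\bigl((1+\alpha)/(1-\alpha)\bigr)$, because the density is decreasing on $[1,2]$ and so any mass redistributed to an adjacent level is charged against a strictly smaller density value. Writing $s:=2^{1/(2a)}$ so that $\alpha=1-1/s$, one obtains $(1+\alpha)/(1-\alpha)=2s-1\le s^2$ (equivalent to $(s-1)^2\ge 0$), hence $\mu_j\le\log_2 s^2=1/a$. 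Together with $\sum_{j\ge 2}1/j\le\ln n$ this gives $\Pr[\forall i,\,B_X(x,r_i)\subseteq\mathcal{P}_i(x)]\ge e^{-2\ln n/a}=n^{-2/a}\ge\tfrac{1}{2}|X|^{-2/a}$. The step I expect to require the most care is the boundary bookkeeping in $\mu_j$ — specifically verifying that clipping near $u_j\approx 1$ or $u_j\approx 2$ combined with the adjacent-level contribution cannot inflate the bound past $\log_2((1+\alpha)/(1-\alpha))$ — which is ultimately a monotonicity argument for the density $1/(x\ln 2)$.
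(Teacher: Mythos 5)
Your argument is correct, and in fact it proves a marginally stronger bound of $|X|^{-2/a}$ without the extra $\tfrac12$, but it takes a genuinely different route from the paper's. The paper (via Lemmas~\ref{lemma:bucket} and~\ref{lemma:range}) discretizes the log-scale circle into $a$ buckets of width $1/a$, uses $\sum_i\ln(1-p_i)\ge-\ln n$ to show that at least an $\epsilon$-fraction of buckets are ``light,'' and then averages over the $a$ equally likely buckets containing $\log_2\beta$, choosing $\epsilon=1/2$ at the end. You instead keep $\beta$ continuous: you invoke \emph{mutual independence} of the left-to-right-minimum events $\{j\in\Xi\}$ over a random permutation (a strictly stronger but equally classical fact than the marginal probabilities the paper relies on, and the reveal-the-relative-ranks argument you sketch is exactly the right proof), convert the product $\prod_{j\in J(\beta)}(1-1/j)$ to an exponential with $1-1/j\ge e^{-2/j}$, pull the expectation over $\beta$ inside the exponent with Jensen, and then charge each index $j$ a probability $\mu_j\le\log_2\tfrac{1+\alpha}{1-\alpha}\le 1/a$ of landing in a bad annulus. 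The per-index bound $\mu_j\le 1/a$ plays the role of the paper's ``light bucket'' step, and the algebraic identity $2s-1\le s^2$ replaces the paper's choice of $\epsilon$. What your approach buys: it avoids the discretization that costs the constant factor, and it exposes $\log_2\tfrac{1+\alpha}{1-\alpha}$ as the natural width of the excluded annulus --- the standard lens in CKR-style padded-decomposition analyses, now adapted to the single-permutation, fixed-ratio FRT setting. One remark on the step you flagged as delicate: the boundary bookkeeping for $\mu_j$ evaporates if you change variables to $b=\log_2\beta\in[0,1]$ \emph{before} bounding $\mu_j$, since then the density is uniform and the per-level bad intervals are disjoint integer translates of one interval of length $\log_2\tfrac{1+\alpha}{1-\alpha}$; the constraint $i\ge 0$ only removes intervals and so can only decrease $\mu_j$, and no appeal to monotonicity of the density $1/(x\ln 2)$ is needed.
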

Note that the $1/2$ on the right side of the inequality only leads a doubling of the overall trees needed for the Ramsey-based ADO, but does not affect
asymptotic bounds.


To prove Theorem~\ref{thm:FRTRAM}, we first prove some required lemmas.

\begin{lemma}
\label{lemma:bucket}
Given $n-1$ arbitrarily chosen integers $v_1,v_2,\cdots,v_{n-1}$ in
$[a]$ ($a\ge 2$), let $S$ be a set in which each $v_i$ is selected
independently at random with probability $p_i\leq {1/(i+1)}$.
 Given an integer $v$ chosen uniformly at random from $[a]$,
 we have that \[\Pr\left[\{s = v : s \in S\} = \varnothing\right] \geq {\epsilon}\cdot
 n^{-1/(a(1-\epsilon))}\]
 for any $\epsilon\in
 [a-1]/a$\footnote{$\epsilon\in [a-1]/a$ means $\epsilon=i/a$ where
   $i\in[a-1]$.}.
\end{lemma}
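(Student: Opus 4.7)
The plan is to condition on $v$, exploit a telescoping product that emerges from the assumption $p_i\le 1/(i+1)$, and finish with AM--GM.

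First, since $v$ is uniform on $[a]$ and independent of the selections defining $S$, I would split
\[ \Pr\!\left[\{s\in S:s=v\}=\varnothing\right] \;=\; \frac{1}{a}\sum_{j=1}^{a}\prod_{i:\,v_i=j}(1-p_i) \;\ge\; \frac{1}{a}\sum_{j=1}^{a} f_j, \]
where $f_j:=\prod_{i:\,v_i=j}\tfrac{i}{i+1}$, using $1-p_i\ge i/(i+1)$. The sets $\{i:v_i=j\}$ for $j\in[a]$ partition $\{1,\dots,n-1\}$, so the per-value products multiply out to a clean telescoping product across all indices:
\[ \prod_{j=1}^{a} f_j \;=\; \prod_{i=1}^{n-1}\frac{i}{i+1} \;=\; \frac{1}{n}. \]

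Next I apply the AM--GM inequality to the $a$ positive numbers $f_1,\dots,f_a$, giving
\[ \frac{1}{a}\sum_{j=1}^{a} f_j \;\ge\; \Big(\prod_{j=1}^{a} f_j\Big)^{1/a} \;=\; n^{-1/a}, \]
hence $\Pr[\cdot]\ge n^{-1/a}$. To reach the form stated in the lemma, it remains to verify $n^{-1/a}\ge\epsilon\cdot n^{-1/(a(1-\epsilon))}$, which is equivalent to $n^{\epsilon/(a(1-\epsilon))}\ge\epsilon$; since $n\ge 1$, $a\ge 2$, and $\epsilon\le(a-1)/a<1$, the exponent on the left is positive and the left side is at least $1\ge\epsilon$.

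The real conceptual step is spotting the telescoping: once each $(1-p_i)$ is replaced by the weaker lower bound $i/(i+1)$, the $a$ products---one per value in $[a]$---multiply to $1/n$ regardless of how the adversary chooses $(v_1,\dots,v_{n-1})$. Everything else is forced by AM--GM and a monotonicity check. The stated form $\epsilon\cdot n^{-1/(a(1-\epsilon))}$ appears to be a parameterized convenience for the downstream Ramsey-partition argument (trading the coefficient $\epsilon$ against the exponent $1/(a(1-\epsilon))$); the derivation above actually gives the uniformly stronger bound $n^{-1/a}$.
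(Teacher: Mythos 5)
Your proof is correct, and it takes a genuinely different route from the paper's. The paper conditions on the bucket and works in log-space: it sets $s_j=\sum_{i:v_i=j}\ln(1-p_i)$, notes $\sum_{j=1}^a s_j\ge -\ln n$ by the same telescoping product you use, and then runs a Markov/pigeonhole count to show that at least $\epsilon a$ of the buckets have $s_j\ge -\tfrac{\ln n}{a(1-\epsilon)}$; the bound $\epsilon\cdot n^{-1/(a(1-\epsilon))}$ is then ``probability $\epsilon$ of landing in a good bucket, times that bucket's survival probability.'' Your proof replaces the counting-of-good-buckets step with a single application of AM--GM to $f_1,\dots,f_a$, using $\prod_j f_j=1/n$ to get $\tfrac{1}{a}\sum_j f_j\ge n^{-1/a}$ directly. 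This is cleaner and yields a strictly stronger, parameter-free bound that dominates $\epsilon\cdot n^{-1/(a(1-\epsilon))}$ for every admissible $\epsilon$ (your final monotonicity check is correct: the exponent $\tfrac{\epsilon}{a(1-\epsilon)}$ is positive, so $n^{\epsilon/(a(1-\epsilon))}\ge 1>\epsilon$). What the paper's version buys is the explicit $(\epsilon, 1/(a(1-\epsilon)))$ trade-off that the authors later carry into the Ramsey-partition stretch calculation; what yours buys is a sharper constant-free inequality that would only improve the downstream constants. Both rely on the same telescoping observation $\prod_{i=1}^{n-1}\tfrac{i}{i+1}=\tfrac{1}{n}$ --- the difference is whether one averages before or after taking logarithms, and AM--GM is exactly the inequality that lets you average first.
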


\begin{proof}
Construct $a$ buckets so that each value is in the $v_i$-th bucket.
Let $s_j=\ln \prod_{v_i\in{\rm{bucket}}_j}(1-p_i)=\sum_{v_i\in {\rm{bucket}_j}}\ln(1-p_i)$.
With the fact that $\prod_{i=2}^{n}(1-1/i)=1/n$,
we have $\sum_{j=1}^{a}s_j=\sum_{i=1}^{n}\ln(1-p_i)\ge \sum_{i=1}^{n-1}\ln (i/(i+1))=-\ln n$.
Therefore, there exists at least $\epsilon a$ buckets such that their $s_j$ are at least $-1/(a(1-\epsilon))\cdot \ln n$.
For these buckets, the probability that none of the points in a specific bucket are selected is $\prod_{v_i\in{\rm{bucket}_j}}(1-p_i)=e^{s_j}\ge n^{-1/(a(1-\epsilon))}$.
Hence the probability that a randomly chosen bucket is empty is at least $\epsilon\cdot n^{-1/(a(1-\epsilon))}$.
\end{proof}

\hide{
\begin{proof}
Construct $a$ buckets so that each value is in the $v_i$-th bucket. Let $s_i$ be the sum of the probabilities for the values in the $i$-th bucket.

Since $\sum_{i=1}^{n}p_i=\sum_{i=1}^{a}s_i<\ln n$, at least $a/2$ buckets that $s_i$(s) are less than $(2\ln n)/a$. For these buckets, the probability that none of the points in a specific bucket are selected is $\prod_{v_i\in{\footnotesize\tb{bucket}}}(1-p_i)>\exp(1.4\cdot (2\ln n)/a)=n^{-{2.8\over a}}$. Therefore on average, the probability that a chosen bucket is empty is at least ${1\over 2}n^{-{2.8\over a}}$.
\end{proof}}

\begin{lemma}
\label{lemma:range}
Given $n-1$ points $v_1,v_2,\cdots,v_{n-1}$ in $[0,1)$, select each
independently with probability $p_i\leq {1/(i+1)}$.
For a uniformly randomly-picked $b$ in $[0,1)$, the probability that no point $v_i$ in the range $[b-1/2a,b+1/2a) \newmod 1$\footnote{$\newmod 1$ here means to wrap the interval into the range $[0,1)$. For example, $[0.8,1.2)\newmod 1=[0.8,1)\cup[0,0.2)$.} is selected is at least ${\epsilon}\cdot n^{-1/(a(1-\epsilon))}$ for any positive integer $a\ge 2$ and $\epsilon\in [a-1]/a$.
\end{lemma}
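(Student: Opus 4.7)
The plan is to reduce Lemma~\ref{lemma:range} directly to Lemma~\ref{lemma:bucket} by conditioning on the fractional offset of $b$ within a length-$1/a$ sub-interval of $[0,1)$. Write $b = j/a + c$ where $c = b \bmod (1/a) \in [0, 1/a)$ and $j \in \{0,1,\ldots,a-1\}$. When $b$ is uniform on $[0,1)$, the pair $(c,j)$ is uniform on $[0, 1/a) \times \{0,\ldots,a-1\}$ with the two coordinates independent, so it suffices to prove the conditional lower bound for every fixed $c$ and then integrate.

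Fix such a $c$. For $j = 0, \ldots, a-1$ define intervals $I^{(c)}_j = [j/a + c - 1/(2a), j/a + c + 1/(2a)) \bmod 1$. These are translates of one another by multiples of $1/a$ and hence partition $[0,1)$ up to a measure-zero set; moreover $I^{(c)}_j$ coincides with the query window $W_b = [b - 1/(2a), b + 1/(2a)) \bmod 1$ exactly when $b = j/a + c$. Assign each point $v_i$ to the bucket index $b^{(c)}_i \in [a]$ of the unique $I^{(c)}_j$ containing it; this is a deterministic function of $c$ and the given $v_i$'s. Now invoke Lemma~\ref{lemma:bucket} with the integers $b^{(c)}_1,\ldots,b^{(c)}_{n-1}$ in the role of the ``arbitrary integers in $[a]$'', the same selection probabilities $p_i$, and the uniform $j \in [a]$ as the target. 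Lemma~\ref{lemma:bucket} then gives $\Pr[W_b \text{ contains no selected } v_i \mid c] \geq \epsilon \cdot n^{-1/(a(1-\epsilon))}$, and since this bound is independent of $c$, averaging over $c$ completes the argument.

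The main (and essentially only) obstacle is identifying the correct discretization: one must choose $a$ translates of $W_b$ whose union tiles $[0,1)$, so that the continuous ``random window'' question collapses to the discrete ``random bucket'' question already solved by Lemma~\ref{lemma:bucket}. The modular wraparound in the statement of Lemma~\ref{lemma:range} is precisely what allows these $a$ translates to partition $[0,1)$ without edge effects, and once this correspondence is set up the desired inequality is inherited from Lemma~\ref{lemma:bucket} without loss.
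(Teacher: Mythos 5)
Your proof is correct and follows essentially the same route as the paper's: both reduce to Lemma~\ref{lemma:bucket} by tiling $[0,1)$ with $a$ translates of the query window and observing that, conditioned on the fractional offset of $b$ within a length-$1/a$ cell, the window index is uniform on $[a]$. The paper phrases the conditioning slightly more tersely (noting that all $b' = (b + i/a) \bmod 1$ produce the same bucket decomposition), but this is the same decomposition you make explicit by writing $b = j/a + c$ and fixing $c$.
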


\begin{proof}
Given $b$, we construct $a$ buckets where the $i$-th bucket contains the points that fall into the range $[b+(i-1/2)/a,b+(i+1/2)/a)\newmod 1$. Since any $b'=(b+i/a)\newmod 1~(i\in[a])$ will create the same buckets and $b$ is uniformly distributed in $[0,1)$, by applying Lemma~\ref{lemma:bucket}, the probability that a chosen bucket is empty is at least ${\epsilon}\cdot n^{-1/(a(1-\epsilon))}$ on average.
\end{proof}

In Lemma~\ref{lemma:range}, $b$ will be related to $\beta$ in the FRT tree, and $a$ will represent the padded ratio in Theorem~\ref{thm:FRTRAM}.

\begin{proof}[Proof of Theorem~\ref{thm:FRTRAM}]
In this proof, we focus on one specific element $x\in X$.

Let $d_1,d_2,\cdots,d_{|X|-1}$ be the distances from $x$ to all the other elements. We transform these distances to the log-scale and ignore the integer parts by denoting $\delta(d_i)=(\log_2 d_i) \newmod 1$.
Therefore, the radii of the hierarchical partitions $\{\beta\Delta,\beta\Delta/2,\beta\Delta/4, \dots\}$ will be exactly the same as $\beta'=\log_2(\beta\Delta)\newmod 1=\log_2 \beta$ after this transformation, and $\beta'$ follows the uniform distribution on $[0,1]$ according to the FRT algorithm.

WLOG, assume the distances are in increasing order, i.e. $0<d_1\leq d_2\leq \cdots\leq d_{|X|-1}$. Due to the property of a random permutation, the probability for each element to be in the \seq~of $x$ is $p_1={1\over 2},p_2={1\over 3},\cdots,p_{n-1}={1\over n}$.

Now we compute the probability that none of the distances from $x$ to the elements in the \seq~of $x$ are in the padded region near the boundary. The radii of the partitions are $2^{-i}\beta\Delta$, so the padded region stated in Theorem~\ref{thm:FRTRAM} is $\bigcup_{i=0}^{+\infty}\left[\left(1-\left(1-2^{-1/2a}\right)\right)\cdot 2^{-i}\beta\Delta,\left(1+\left(1-2^{-1/2a}\right)\right)\cdot 2^{-i}\beta\Delta\right]$. These intervals after transformation are all in $[\beta'-1/2a,\beta'+1/2a)\newmod 1$. 
The probability that none of the elements in the interval $[\beta'-1/2a,\beta'+1/2a)\newmod 1$ is in the \seq~of $x$, is equivalent to elements not
being selected in the range in Lemma~\ref{lemma:range}.   This gives a probability of at least ${1\over 2}|X|^{-{2/a}}$ if $\epsilon$ is set to be $1/2$ (or $\epsilon=(a-1)/2a$ for odd integer $a$).

Since all of $x$'s ancestors are in the \seq~of $x$, the probability that none of them are in that padded range is also at least ${1\over 2}|X|^{-{2/a}}$.
\end{proof}
\begin{remark}
\emph{
There are two major differences between this construction (based on FRT trees) and the original construction in~\cite{Ramsey} (based on hierarchical CKR partitions). First, there is only one random permutation for all levels in one FRT tree, but the hierarchical CKR partitions requires different random permutations in different levels. Second, the radii for the partitions decrease exactly by a half between two consecutive levels in FRT trees, but the radii in hierarchical CKR partitions is randomly generated in every level.}

\emph{Theorem~\ref{thm:FRTRAM} shows that the extra randomness to generate multiple random permutation and radii is unnecessary to construct Ramsey partitions. Moreover, since the radii for the partitions decrease by a half between two consecutive levels (instead of at most 16 in the original construction), the stretch can be reduced to $18.5k$ as opposed to $128k$, which is shown in appendix~\ref{sec:approx}.  Lastly, the proof is also very different and arguably simpler.}
\end{remark}

\begin{corollary}
\label{prob4}
With Theorem~\ref{thm:FRTRAM}, we can accelerate the time to construct the Ramsey-partitions-based Approximate Distance Oracle in~\cite{Ramsey} to
$$O\left(n^{1/k}(m+n \log n)\log n\right)$$
on a graph with $n$ vertices and $m$ edges, improving the stretch to $18.5k$,
while maintaining the same storage space and constant query time.
\end{corollary}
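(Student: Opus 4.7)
The plan is to apply the Mendel--Naor approximate distance oracle construction~\cite{Ramsey} in a black-box fashion, simply substituting FRT trees for the CKR-based hierarchical Ramsey partitions as the algorithmic building block. Theorem~\ref{thm:FRTRAM} certifies that the probability distribution over FRT trees is an asymptotically tight Ramsey partition with ratio $c=2$ and padding exponent $\alpha = 1 - 2^{-1/(2a)}$, so it fits the input requirements of the Mendel--Naor framework directly.

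First, I would invoke the ADO framework: sample $O(n^{1/k})$ independent hierarchical partitions, and for each vertex $x$ record the smallest level $i$ at which $B_X(x,\alpha\,2^{-i}\Delta)\subseteq \mathcal{P}_i(x)$, together with the tree-distance to the representative at that level. The Ramsey property of Theorem~\ref{thm:FRTRAM} guarantees that with $O(n^{1/k})$ samples each query is answered with high probability, and the answer is a $(1/\alpha)$-approximation up to constant factors coming from the level-ratio $c=2$.

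Second, I would bound the construction cost by summing the per-tree cost over the $O(n^{1/k})$ samples. Each FRT tree is built in $O(m\log n)$ time w.h.p.\ by the main theorem of Section~\ref{sec:eff-constr-app-seq}, and the per-tree postprocessing to extract the padded-level data for every vertex is $O(n\log n)$ w.h.p.\ since each vertex appears in $O(\log n)$ entries of its \cps{}. The total is therefore $O\bigl(n^{1/k}(m+n\log n)\log n\bigr)$ as stated, which strictly improves the $O(n^{1/k}m\log^3 n)$ cost of~\cite{FastCKR}. The storage and $O(1)$ query time are inherited from the Mendel--Naor data structure verbatim, since the per-vertex information we store is identical in form.

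The main obstacle is the quantitative stretch $18.5k$. This requires tracing through the Mendel--Naor approximation analysis with the specific parameters $c=2$ and $\alpha = 1 - 2^{-1/(2a)}$ from Theorem~\ref{thm:FRTRAM}, then choosing the padded-ratio parameter $a$ to match the storage exponent $1+1/k$ while minimizing the resulting stretch. The improvement over $128k$ in~\cite{Ramsey} and $33k$ in~\cite{naor2012} comes from two savings: our ratio between consecutive partition radii is the fixed constant $2$ rather than the random value in $[1,16]$ used by hierarchical CKR partitions, and a single permutation across levels tightens the padding constant. This routine but parameter-sensitive calculation is carried out in Appendix~\ref{sec:approx}.
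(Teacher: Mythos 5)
Your structure is essentially the paper's: black-box substitution of FRT trees for CKR-based hierarchical partitions in the Mendel--Naor framework, $O(n^{1/k})$ samples, cost accounting per tree, and the stretch constant traced through in Appendix~\ref{sec:approx}. The paper's own justification is equally terse.

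However, you have conflated two distinct FRT constructions in a way that breaks the $18.5k$ claim. You cite the main theorem of Section~\ref{sec:eff-constr-app-seq} to build each FRT tree in $O(m\log n)$ time. That algorithm uses the approximate SSSP of Section~\ref{sec:sssp} and the range-partitioned subproblems, producing distances that are only an $8$-approximation to $d_G$. The $18.5k$ stretch in Appendix~\ref{sec:approx} is derived for \emph{exact} FRT trees, where the padded-region geometry in Theorem~\ref{thm:FRTRAM} depends on true distances; feeding in $8$-approximate distances inflates the stretch by a further constant, and the paper is careful to say that the faster $O(n^{1/k}m\log n)$ variant ``still has a constant stretch factor'' rather than $18.5k$. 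The stated corollary time $O\bigl(n^{1/k}(m+n\log n)\log n\bigr) = O\bigl(n^{1/k}(m\log n + n\log^2 n)\bigr)$ corresponds instead to the exact Cohen-style dominance-sequence construction (Appendix~\ref{app:LE-list}), which costs $O(m\log n + n\log^2 n)$ per tree and preserves the exact Ramsey-partition analysis. Note also that your own tally, $O(n^{1/k})\cdot O(m\log n + n\log n) = O(n^{1/k}m\log n)$, is strictly smaller than the stated bound, another signal that you are not computing the intended quantity. To repair the argument: invoke the exact $O(m\log n + n\log^2 n)$ construction for the corollary's stated bound and $18.5k$ stretch, and reserve the $O(m\log n)$ algorithm for the further speedup with only a (larger) constant stretch.
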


This can be achieved by replacing the original hierarchical partition trees in the distance oracles by FRT trees (and some other trivial changes).
The construction time can further reduce to $O\left(n^{1/k}m\log n\right)$ using the algorithm introduced in Section~\ref{sec:eff-constr-app-seq} while the oracle still has a constant stretch factor.
Accordingly, the complexity to construct
Christian Wulff-Nilsen's Distance Oracles~\cite{WN} and Shiri Chechik's Distance Oracles~\cite{SC} can be reduced to
$$O\left(kmn^{1/k}+kn^{1+1/k}\log n+n^{1/ck}m\log n\right)$$
since they all use Mendel and Naor's Distance Oracle to obtain an initial distance estimation.
The acceleration is from two places: first, the FRT tree construction is faster; second, FRT trees provide better approximation bound, so the $c$ in the exponent becomes smaller.

\hide{
\begin{proof}
For simplicity, define $P(x)=\Pr\left[\forall k\in \mathbb{N},B_X\left(x, \left(1-2^{-1/4\lfloor a\rfloor}\right){2^{-k}\diam(X)}\right)\subseteq \mathcal{P}_k(x)\right]$ for $x\in X$.

We assume that an adversary is able to pick the any distances to other elements but not the parameters including the random permutation $\pi$ and $\beta$ to generate FRT trees, and we will show that the worst case follows the theorem.

This proof focuses on one specific element $x$ in the metric space. We first liberalize the restriction that a metric space must be symmetric to strengthen the adversary, such that the adversary can choose the distances of each element without interfering its actions with other elements.

Let $d_1,d_2,\cdots,d_{|X|-1}$ be the distances from $x$ to all the other elements. We transform these distances to the log-scale and ignore the integer parts by denoting $\delta(d_i)=\{\log_2 d_i\}$ where $\{\cdot\}$ is the Sawtooth function. The benefit for doing this is because the radii of hierarchical partitions after the transformation will be exactly the same. We want to show that when the random permutation is generated and the $\beta$ is randomly picked (uniformly random after transformation based on the FRT algorithm), the probability that none of $x$'s ancestors (elements in \cps~of $x$) is closer than a ratio of $2^{-1/4\lfloor a\rfloor}$ of the radius to the boundary of the partition in the corresponding level is at least ${1\over 2}|X|^{-{2.8\over a}}$. For simplicity, we change the ancestors to all elements in \seq~of $x$ in this argument. Since the elements in \cps~of $x$ are a subset of the elements in \seq~of $x$, the overall probability will not increase.

WLOG, assume the adversary picks the distances in increasing order, i.e. $0<d_1\leq d_2\leq \cdots\leq d_{|X|-1}$. Due to the property of a random permutation, the probability for each element to be in the \seq~is $p_1={1\over 2},p_2={1\over 3},\cdots,p_{n-1}={1\over n}$. (Probably some explanation here.) (Therefore, $P(x)=\min_{k=0}^{\infty}{\left|{1-\beta2^{-k}\diam(X)}\right|}$ donno how to write this....)

Now we partition the range of $\{\delta_i\}$ into $a$ buckets, and the $j$-th bucket corresponds to the range of $\left[(j-1)/a,j/a\right)$. In Theorem~\ref{thm:FRTRAM} we have $\alpha = 1-2^{-1/4\lfloor a\rfloor}$. Since $\log_2(1-\alpha)=\log_2(2^{-1/4\lfloor a\rfloor})=-1/4\lfloor a\rfloor$ and $\log_2(1+\alpha)<\log_2(1/(1-\alpha))=1/4\lfloor a\rfloor$, if $\delta(\beta\cdot2^{-k}\diam(X))$ falls into the region of $\left[(j-0.75)/a,(j+0.75)/a\right]$, then the elements in interval $[(1-\alpha)\beta\cdot2^{-k}\diam(X),(1+\alpha)\beta\cdot2^{-k}\diam(X)]$ will only be in bucket $j$.

Now we analysis the probability that none of the elements in bucket $j$ is in \seq~of $x$ no matter how the adversary chooses the distances, which is equivalent to put the $i$-th element into one arbitrary bucket with the probability of $p_i=1/(i+1)$ to be in the \seq~of $x$. Since $\sum_{i=1}^{|X|-1}p_i<\ln|X|$, we know that at least $a/2$ buckets have elements that the sum of $p_i$ in each bucket is no more than $2\ln|X|/a$. Therefore, the probability that none of the elements appears in \seq~of $x$ is $\prod_{i}(1-p_i)>\exp(1.4\cdot 2\ln|X|/a)=|X|^{-{2.8\over a}}$. Since there might be at most $a/2$ buckets that contain elements that the sum of $p_i$ in each bucket is more than $2\ln|X|/a$, a uniformly randomly chosen bucket that has no elements in \seq~of $x$ is at least ${1\over 2}|X|^{-{2.8\over a}}$.
\end{proof}
}

\section{Conclusion}

In this paper we described a simple and efficient algorithm to construct FRT embeddings on graphs using $O(m \log n)$ time, which is based on a novel linear-time algorithm for single-source shortest-paths proposed in this paper.
We also studied the distance preserving properties on FRT embeddings, and proved that FRT trees are asymptotic optimal Ramsey partitions.
Lastly, we used the FRT trees to construct a simple and programming-friendly distance oracle, which surprisingly shows good approximation on all of our testing cases.


\bibliographystyle{plain}
\bibliography{main}
\appendix
\section{Proof for Lemma \ref{lemma:sigma}}
\label{app:lemma3}

\begin{proof}[Proof of Lemma~\ref{lemma:sigma}]
For a fixed point $x\in X$, we first sort all of the vertices by their distances to $x$, and hence $0=d(x,x_1'=x)<d(x,x_2')\le d(x,x_3')\le \cdots \le d(x,x_n')$. Let $y_i=\min\{\pi(x_j')\,|\,j\leq i\}$. By definition, $x_i'$ is in the \seq~of $x$ if and only if $\pi(x_i')<\pi(x_j')$ for all $j<i$, which is also equivalent to $y_i<y_{i-1}$ for $i>1$. Thus, $\left|\chi_{\pi}^{(x)}\right|$ is the number of different elements in $y_i, 1\le i \le n$. Since $\pi$ is a random permutation and independent to the distances to $x$, by applying Lemma~\ref{lemma:logn}, there exist $O(\log n)$ elements in $\chi_{\pi}^{(x)}$ w.h.p.

Therefore, the total elements in \seq{}s of $G$ are $O(n\log n)$ w.h.p.\ simply by summing up \seq{}s for all vertices.
\end{proof}

\section{\Seq{} Construction based on Dijkstra's Algorithm}
\label{app:LE-list}

\begin{algorithm}[!tp]
\caption{The \seq{}s construction}
\label{algo:seq}
\KwIn{A weighted graph $G=(V,E)$ and a random permutation $\pi$.}
\KwOut{\Seq{}s $\chi_{\pi}$ and the set of associated distances $d$.}
    \vspace{0.5em}
    \lForEach{$v\in V$} {
    $\delta(v)\leftarrow+\infty$
    }
    $d\leftarrow \varnothing$\\
    \For{$i\leftarrow 1$ to $|V|$}
    {
        $u\leftarrow \pi^{-1}(i)$\\
        $Q\leftarrow \{u\}$\\
        $\delta(u)\leftarrow 0$\\
        \While{$Q\ne\varnothing$} {
            Extract $v\in Q$ with minimal $\delta(v)$\\
            \nllabel{extract}
            $\chi_{\pi}^{(v)}\leftarrow \chi_{\pi}^{(v)}+u$~~~~// add $u$ to $v$'s \seq\\
            $d\leftarrow d\cup\{(u,v)\rightarrow\delta(v)\}$\\
            \ForEach{$w: (v,w,l_{v,w})\in E$}
            {
                \If{$\delta(w)>\delta(v)+l_{v,w}$ \nllabel{check}}
                    {
                    $\delta(w)\leftarrow\delta(v)+l_{v,w}$~~~~// $l_{v,w}$ is the edge length\\
                    \nllabel{decrease}
                    $Q\leftarrow Q\cup \{w\}$\\
                    }
            }
        }
     }
    \Return{$\chi_{\pi}$, $d$}
\end{algorithm}

Here we review the algorithm proposed by Cohen~\cite{cohen1997} to compute \seq{}, which is basically a variation of Dijkstra's algorithm. The \seq{}s $\chi_{\pi}$ can be trivially constructed by running SSSP from each vertex, but this takes $O(n(m+n\log n))$ time.
An important observation is that:
for each vertex $u$, if its distance to another vertex $v$ is larger than $v$'s distance to a higher priority vertex $\bar{u}$, then it is not necessary to explore $v$ when running the SSSP from $u$.
This is because $\bar{u}$ is closer to $v$ and has a higher priority.
Similar ideas can be found in related works~\cite{TZ,FastCKR}.

Inspired by this observation, we run Dijkstra algorithm with the
source vertices in increasing order in $\pi$, but for each source
vertex $u$, we only explore the vertices that are closer to $u$ than
any other previous source vertices.  The algorithm first starts the
SSSP from $\pi^{-1}(1)$ and adds this vertex to the \seq~of all
vertices, then it starts the SSSP from $\pi^{-1}(2)$ but only adds it
to the \seq~of the vertices that are closer to $\pi^{-1}(2)$ than
$\pi^{-1}(1)$, and this repeats for $n$ times.  In the algorithm,
$\delta(u)$ in the $i$-th round is the shortest distance of $u$ to any
of the first $i$ vertices, so that the SSSP is restricted in
each round to vertices for which $\delta(u)$ is updated.
Since $\delta(u)$ is not initialized in each round, the overall time
complexity is largely decreased.

\begin{lemma}
Given a graph metric $(X,d_X)$ and a random permutation $\pi$,
the \seq{}s $\chi_{\pi}$ can be constructed in $O(m\log n+n\log^2 n)$
time w.h.p.
\end{lemma}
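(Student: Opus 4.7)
The plan is to bound the running time of Algorithm~\ref{algo:seq} by separately charging the cost of the priority-queue operations and the edge-relaxation checks, then applying Lemma~\ref{lemma:sigma} to convert both into the stated bound. I would assume the priority queue $Q$ is implemented as a Fibonacci heap~\cite{Fib}, so that each \mf{Extract-Min} costs $O(\log n)$ amortized while \mf{Insert} and \mf{Decrease-Key} each cost $O(1)$ amortized.

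The key structural claim is that, across all $n$ outer rounds, a vertex $v$ is extracted from $Q$ at line~\ref{extract} exactly $|\chi_{\pi}^{(v)}|$ times. To see this, I would establish the invariant that $\delta(v)$ is monotonically non-increasing across rounds and that at the beginning of round $i$ it equals $\min_{j<i} d(\pi^{-1}(j),v)$. The check at line~\ref{check} then succeeds and causes $v$ to be pushed into $Q$ in round $i$ if and only if $d(\pi^{-1}(i),v) < \min_{j<i} d(\pi^{-1}(j),v)$, which is exactly the condition for $\pi^{-1}(i)$ to dominate $v$. Combined with Lemma~\ref{lemma:sigma}, which bounds $|\chi_\pi| = O(n\log n)$ w.h.p., the total number of extractions is $O(n\log n)$ w.h.p., contributing $O(n\log^2 n)$ to the running time.

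Next I would charge the cost of examining edges. In any round, each edge $(v,w)$ is scanned in the inner \textbf{foreach} loop once per extraction of $v$ (and similarly once per extraction of $w$), so the total number of edge scans is at most $\sum_{v\in V}\deg(v)\cdot|\chi_{\pi}^{(v)}|$. Because $|\chi_{\pi}^{(v)}|=O(\log n)$ w.h.p.\ for every $v$ by Lemma~\ref{lemma:sigma}, the total number of scans is $O(m\log n)$ w.h.p. Each scan triggers at most one amortized $O(1)$ \mf{Insert}/\mf{Decrease-Key} on the Fibonacci heap, giving $O(m\log n)$ total amortized cost for this part. Summing the two contributions yields the claimed $O(m\log n + n\log^2 n)$ bound w.h.p.

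The main subtlety that requires care is justifying the invariant on $\delta$ across rounds. Specifically, one must argue that the restricted Dijkstra in round $i$ discovers the correct value $d(\pi^{-1}(i),v)$ for every $v$ whose $\delta(v)$ strictly decreases, while safely pruning the rest. This follows from a standard Dijkstra-style shortest-path exchange argument applied to the modified relaxation rule: because $\delta$-values only decrease and we extract in order of current $\delta$, any pruned vertex already has $\delta(v)\leq d(\pi^{-1}(i),v)$ by a previous higher-priority source, hence can be ignored in this round without affecting correctness of the eventually extracted values. Once this invariant is in place, the counting arguments above go through directly.
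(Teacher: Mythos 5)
Your proposal matches the paper's approach: both use a Fibonacci heap, bound the number of delete-min operations by $|\chi_\pi|=O(n\log n)$ and the number of edge scans/decrease-keys by $\sum_u \deg(u)\cdot|\chi_\pi^{(u)}|=O(m\log n)$ via Lemma~\ref{lemma:sigma}, and combine to get $O(m\log n+n\log^2 n)$ w.h.p. The paper states these counts directly and deals with correctness of the pruned Dijkstra in a separate lemma (Lemma~\ref{lemma:correctness-dom-seq}); your additional sketch of the $\delta$-invariant is precisely that separate argument.
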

\begin{proof}
Lemma~\ref{lemma:sigma} indicates that w.h.p.\ the total
number of delete-min operations (line \ref{extract}) is $|\chi_{\pi}|=O(n\log n)$ and the overall number of  decrease-key operations (line \ref{decrease}) is $\sum_{u}\deg(u)\cdot\left|\chi_{\pi}^{(u)}\right|=O(m\log n)$.
Thus, if the priority queue $Q$ in Algorithm~\ref{algo:seq} is implemented using Fibonacci heap, the overall time complexity for Algorithm~\ref{algo:seq} is $O(m\log n+n\log^2 n)$ w.h.p.
\end{proof}

Lastly, we show the correctness of the \seq{}s construction algorithm.

\begin{lemma}
\label{lemma:correctness-dom-seq}
Given the random permutation $\pi$, and the graph $G=(V,E)$, the dominance sequence of a specific vertex $v \in V$ will be stored in $\chi_{\pi}^{(v)}$ at the end of Algorithm \ref{algo:seq}, and consequently the distances between every vertex and its dominating vertices are in $d$.
\end{lemma}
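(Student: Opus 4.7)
}
The plan is to prove by induction on the outer loop index $i$ that at the start of iteration $i$ of Algorithm~\ref{algo:seq}, the following invariants hold for every $v\in V$:
(a) $\delta(v)=\min\{d_G(w,v)\mid w\in V,\ \pi(w)<i\}$ (with $\min\varnothing=+\infty$); and
(b) $\chi_\pi^{(v)}$ contains exactly those $w$ with $\pi(w)<i$ that dominate $v$, listed in increasing order of $\pi(w)$, and the corresponding entries in $d$ record $d_G(w,v)$. Once these invariants are established, taking $i=|V|+1$ gives the lemma.

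For the inductive step with source $u=\pi^{-1}(i)$, I would argue two things. First, the ``restricted Dijkstra'' started from $u$ with the leftover $\delta(\cdot)$ values correctly computes, for every vertex $v$ that is extracted at line~\ref{extract}, the true distance $d_G(u,v)$, and moreover $v$ is extracted iff $d_G(u,v)<\delta(v)$ at the start of the iteration (where $\delta(v)$ here is the pre-iteration value of invariant (a)). This is a standard Dijkstra monotonicity argument: when $v$ is extracted, $\delta(v)$ is the length of the shortest $u$-to-$v$ path whose intermediate vertices were all already extracted; the ``skip if no improvement'' test on line~\ref{check} only prunes a vertex $w$ when the alternative $u\to\ldots\to v\to w$ is already no shorter than some already-known $\pi^{-1}(j)$-to-$w$ path ($j<i$), which by the triangle inequality and the inductive invariant means any $u$-path reaching any later vertex through $w$ is dominated by that earlier source, so pruning is safe.

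Second, I would translate the extraction condition $d_G(u,v)<\delta(v)$ directly into the definition of domination. By the pre-iteration invariant (a), $\delta(v)=\min_{\pi(w)<i}d_G(w,v)$, so $d_G(u,v)<\delta(v)$ is equivalent to saying that \emph{no} vertex with priority earlier than $u$ lies within distance $d_G(u,v)$ of $v$, which is exactly the statement that $u$ dominates $v$. Hence $u$ is appended to $\chi_\pi^{(v)}$ precisely for the correct set of $v$'s, and the value $\delta(v)$ stored in $d$ equals $d_G(u,v)$, re-establishing invariant (b). Invariant (a) is re-established because after the iteration we have replaced $\delta(v)$ by $\min(\delta(v),d_G(u,v))=\min_{\pi(w)\le i}d_G(w,v)$. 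The ordering in $\chi_\pi^{(v)}$ is automatic since sources are processed in increasing $\pi$-order and only appended.

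The main obstacle I expect is the first part: verifying that the pruning on line~\ref{check}, which uses leftover $\delta$-values rather than resetting them, does not cause the restricted Dijkstra to miss a vertex $v$ for which $u$ actually is closer than all previous sources. The subtle case is when the shortest $u$-to-$v$ path passes through some intermediate $w$ that an earlier source already reached more cheaply; one must show that if $u$ is strictly closer to $v$ than every earlier source, then there is a shortest $u$-to-$v$ path whose every prefix-endpoint $w'$ satisfies $d_G(u,w')<\delta(w')$ (else some earlier source would be at least as close to $v$ as $u$, contradicting domination). This prefix-closure property is what allows the restricted search to still reach $v$, and I would prove it by a short contradiction argument using the triangle inequality.
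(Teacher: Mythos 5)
Your proposal is correct and follows essentially the same route as the paper's own proof: the paper also reduces the argument to showing that if $u$ dominates $v$ then $u$ dominates every vertex on the shortest $u$-to-$v$ path (your ``prefix-closure'' claim, restated via the implicit invariant $\delta(w')=\min_{\pi(w)<i}d_G(w,w')$), and then pushes an induction along that path to show each vertex is inserted into and extracted from $Q$; the converse direction in the paper is your ``extracted iff $d_G(u,v)<\delta^{\mathrm{pre}}(v)$'' equivalence, proved there by the same triangle-inequality contradiction. Your explicit loop-invariant framing is a slightly cleaner scaffold, but the core technical content is identical.
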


\begin{proof}
First we prove that all vertices that dominate $v$ are in $\chi_{\pi}^{(v)}$.

Assuming that $u\in V$ is a vertex and dominates $v$, we show that $u\in \chi_{\pi}^{(v)}$. Let $(w_1=u,w_2,\dots,w_{k-1},w_k=v)$ be the shortest path from $u$ to $v$. Since $u$ dominates $v$, $\pi(u)<\pi(w_i)$ holds for all $i=2,3,\dots, k$.

We first prove that $u$ dominates all the vertices on the shortest path. Assume to the contrary that there exists at least one vertex $w_i$ that is not dominated by $u$, which means that at least another vertex $u^*$ holds both $\pi(u^*)<\pi(u)$ and $d(u^*,w_i)\leq d(u,w_i)$. This indicates that the distance from $u^*$ to $v$ is at most $d(u^*,w_i)+\sum_{j=i}^{k-1}d(w_j,w_{j+1})\leq\sum_{j=1}^{k-1}d(w_j,w_{j+1})=d(u,v)$. Combining with the other assumption $\pi(u^*)<\pi(u)$, it directly leads to a contradiction that $u^*$ dominates $v$, since $u^*$ is closer to $v$ and owns a higher priority.


Then we prove that when the outermost for-loop is at vertex $u$, vertex $v$ will be extracted (in line \ref{extract}) from $Q$. Then $u$ will be added into $\chi_{\pi}^{(v)}$ and $\delta(v)$ will be updated by $d(u,v)$ by induction.
Initially $w_1=u$ is added to $\chi_{\pi}^{(w_1)}$ (and changes $\delta(w_1)$ to be $d(u,u)=0$), which is the base case.
On the inductive step, we show that as long as $u$ is attached to $\chi_{\pi}^{(w_{i-1})}$ in line \ref{extract}, all $w_{i}$ will be inserted into $Q$ later in line \ref{decrease}, and finally $u$ will be added to $\chi_{\pi}^{(w_{i})}$.
From the previous conclusion, we know that $u$ dominates $w_i$ via the shortest path $(w_1,w_2,\dots,w_i)$, so that $\delta(w_i)>d(u,w_{i-1})+d(w_{i-1},w_i)=\delta(w_{i-1})+d(w_{i-1},w_i)$, which guarantees that the checking in line \ref{check} for edge $(w_i,w_{i+1})$ will be successful. After that, $w_{i}$ will be inserted into $Q$ and $u$ will be attached to $\chi_\pi^{(v)}$ later, and $\delta(w_i)$ will be updated to $d(u,w_{i-1})+c_{w_{i-1},w_i}$, which is the new $d(u,w_i)$.

We now show that all vertices in $\chi_{\pi}^{(v)}$ dominate $v$. Assume to the contrary that at the end of the algorithm, $\exists u \in \chi_{\pi}^{(v)}$ and $u$ does not dominate $v$. This means there exists $u^*$ which satisfies both prior to $u$ and $d(v,u^*)<d(v,u)$. From $\pi(u^*)<\pi(u)$ we know that when the outmost for-loop is at $u$, $u^*$ has already been proceeded so $\delta(v)\le d(v,u^*)< d(v,u)$. Consider when $v$ is added to $Q$, we have $\delta(v)>\delta(u')+c_{v,u'}=d(u,u')+c_{v,u'}\ge d(v,u)$, which contradicts to $\delta(v)<d(v,u)$.

\hide{The we prove that all vertices in $\chi_{\pi}^{(v)}$ dominate $v$. Assume to the contrary that at the end of the algorithm, $\exists u \in \chi_{\pi}^{(v)}$ and $u$ does not dominate $v$. This means there exists $u^*$ which satisfies both prior to $u$ and $d(v,u^*)<d(v,u)$. From $\pi(u^*)<\pi(u)$ we know that when the outmost for-loop is at $u$, $u^*$ has already been proceeded so $\delta(v)\le d(v,u*)< d(v,u)$. Consider when $v$ is added to $Q$, we have $\delta(v)>\delta(u')+c_{v,u'}=d(u,u')+c_{v,u'}\ge d(v,u)$, which contradicts to $\delta(v)<d(v,u)$.}

Lastly, since line 9 and line 10 are always executed together, once a vertex is added to the \seq~of another vertex, the corresponding pairwise distance is added to $d$, and the distance is guaranteed to be the shortest by the correctness of Dijkstra's algorithm.
\hide{If $u$ dominates $v$, we have proved that all $w_i$ will be extracted to update $w_{i+1}$ one by one, and finally $\delta(v)$ is updated by $\delta(w_{k-1})$. This means when $v$ is extracted, $\delta(v)$ has been set as the shortest distance from $v$ to $u$, and this is then stored in $d$. Thus the distance from a vertex to its dominating vertices are in $d$ at the end of the algorithm.}

Also, since $\delta(v)$ is always decreasing during the algorithm, the
dominant vertices are added to $\chi_{\pi}^{(v)}$ in decreasing order
of their distance to $v$.
\end{proof}

\section{Asymptotic Tight Bound of Ramsey partition in Theorem~\ref{thm:FRTRAM}}

In case of any unclear, here we show that in Theorem~\ref{thm:FRTRAM}, $\alpha=\left(1-2^{-1/2a}\right)=\Omega(\gamma)$.
\begin{proof}
Let $f(t)=2^{-t}+\frac{1}{2}t$, then $\frac{\mathrm{d}f}{\mathrm{d}t}=-2^{-t}\ln 2+\frac{1}{2}$.

For any $t$ that $0<t\leq\frac{1}{4}$, we have $-2^{-t}\ln 2+\frac{1}{2}<-2^{-1/4}\ln 2+\frac{1}{2}<0$, which shows that $f$ monotonously decreases in range $\left(0,\frac{1}{4}\right]$. Thus,
\begin{equation}
\label{fand1}
f(t)<f(0)=1
\end{equation}
for $0<t\le \frac{1}{4}$.

Any integer $a$ that $a\ge 2$ leads to $0<\frac{1}{2a}\le \frac{1}{4}$. Plugging $t=\frac{1}{2a}$ in (\ref{fand1}), we get $f(\frac{1}{2a})=2^{-1/2a}+\frac{1}{4a}<1$, which further shows that $1-2^{-1/2a}>\frac{1}{2}\cdot\frac{1}{2a}=\Omega(\frac{1}{2a})$. \end{proof}

\section{Approximation Factor for Theorem~\ref{thm:FRTRAM}}
\label{sec:approx}

For any fix integer $a>1$, we now try to analyze a pair of vertices $u$ and $v$ in graph $G$. Let $u$ be the first padded point in the pair in the $t$-th FRT tree, which means that $u$ is ``far away'' from the boundaries of the hierarchical partitioning in that tree.
Assume that the lowest common ancestor of the vertices be in the $i$-th level in the $t$-th FRT tree.
Since $u$ is padded, the distance $d_G(u,v)$ is at least $\left(1-2^{-1/2a}\right)\cdot{2^{-i-1}\beta\Delta}$.  The distance reported from the tree query $d_T(u,v)$ is no more than $2\cdot 2^{-i}\beta\Delta$.
Therefore, $d_G(u,v)\geq \left(1-2^{-1/2a}\right)/4\cdot d_T(u,v)$.

Storing this data structure requires expected $O\left({1/\epsilon}\cdot n^{1+1/(a(1-\epsilon))}\right)$ space.  Notice that any $a\leq 2$ does not make sense since storing the distances for the whole matric space need $O(n^2)$ space.
For any $a>2$, let $\epsilon=\lfloor a/3\rfloor/a$, so  distance oracle in Theorem~\ref{thm:FRTRAM} using $O\left(n^{1+1/k}\right)$ space (integer $k=a-\lfloor a/3\rfloor$) provided $18.5k$ distance approximation (achieved maximum $18.33k$ when $a=3$).

\end{document}